\def\draft{1}  % 1 for draft version (eg including author notes), 0 for non-draft version
\newcommand{\talkingPoint}[1]{\ifthenelse{\equal{\draft}{1}}{{\color{brown}{---#1---}}}{}}
\newcommand{\CJ}[1]{\ifthenelse{\equal{\draft}{1}}{{\color{red}{---CJ:#1---}}}{}}
\newcommand{\YY}[1]{\ifthenelse{\equal{\draft}{1}}{{\color{blue}{---YY:#1---}}}{#1}}
\newcommand{\commentout}[1]{}
\newtheorem{theorem}{Theorem}
\newtheorem{lemma}[theorem]{Lemma}
\newtheorem{definition}{Definition}
\newtheorem{corollary}[theorem]{Corollary}
\newtheorem{theorem*}{Theorem}
\newtheorem{corollary*}{Corollary}
\def\Tr{\textnormal{Tr}}
\def\<{\langle}
\def\>{\rangle}
\def\C{\mathcal{C}}
\def\E{\mathcal{E}}
\def\negl{\text{negl}}
\numberwithin{theorem}{section}
\numberwithin{equation}{section}
\title{Quantum hashing is maximally secure against classical leakage}
\author{%
   Cupjin Huang and Yaoyun~Shi  \\
 \\
  Department of Electrical Engineering and Computer Science\\
  University of Michigan, Ann Arbor, MI 48109, USA\\
  \texttt{{cupjinh,shiyy@umich.edu}}
}
\date{}
\begin{document}
\maketitle
\thispagestyle{empty}

%%% abstract is limited to 150 words for Nature
\begin{abstract}
%!TEX root =  HuangS16_QIP17.extended_abstract.tex
\noindent
Cryptographic hash functions are fundamental primitives widely used in practice.
For such a function $f:\{0, 1\}^n\to\{0, 1\}^m$,
it is nearly impossible for an adversary to produce the hash $f(x)$ without knowing the secret message 
$x\in\{0, 1\}^n$. Unfortunately, all hash functions are vulnerable under the side-channel attack, which is a
grave concern for information security in practice. This is because typically $m\ll n$ and an adversary needs only $m$ bits of information to pass the verification test.

In sharp contrast, we show that when quantum states are used, the leakage allowed can be almost the entire secret.  More precisely, we call a function that maps $n$ bits to $m$ qubits a quantum cryptographic function
if the maximum fidelity between two distinct hashes is negligible in $n$.
We show that for any $k=n-\omega(\log n)$, all quantum cryptographic hash functions remain 
cryptographically secure when leaking $k$ bits of information.
By the quantum fingerprinting constructions of Buhrman et al.~({\em Phys. Rev. Lett.} 87, 167902), 
for all $m=\omega(\log n)$, there exist such quantum cryptographic hash functions. We also show that one only needs $\omega(\log^2 n)$ qubits to verify a quantum cryptographic hash, rather than the whole classical information needed to generate one.

Our result also shows that to approximately produce a small amount of quantum side information on a classical secret, it may require almost the full information of the secret. 
This large gap represents a significant barrier for proving quantum security of classical-proof extractors.

\end{abstract}

\clearpage
\setcounter{page}{1}

%!TEX root =  HuangS16_QIP17.extended_abstract.tex
\section{Introduction}
\subsection{The problem and the motivation}
Cryptographic hash functions are a fundamental primitive used widely in today's cryptographic systems. They are  considered ``workhorses of modern cryptography''\footnote{Bob Schneier, \url{https://www.schneier.com/essays/archives/2004/08/cryptanalysis_of_md5.html}.} For simplicity, we focus our discussions on {\em keyless} (cryptographic) hash functions, each of which is an efficiently computable function $h$ from some message space $\mathcal{M}$ to some {\em digest space} $\mathcal{T}$~\cite{BonehS:book,preneel1994cryptographic, rogaway2004cryptographic}. Ideally, we want the hash function to have the following properties. First, the digest should be much shorter than the message. Depending on applications, the following security properties are desirable~\cite{rogaway2004cryptographic}. (1) {\em Collision resistant}: It is computationally infeasible to find a ``collision'', i.e., two distinct messages $x$ and $x'$, such that $h(x)=h(x')$.  (2) {\em Preimage Resistance}: It is computationally infeasible to invert $h$. (3) {\em Second Preimage Resistance}: Given a message $x$, it should be computational infeasible to $x'\neq x$ with $h(x')=h(x)$.

Prominent examples of widely used cryptographic hash functions include SHA-256 and SHA-512, part of the SHA-2 algorithms that were designed by NSA and are US Federal Standards. These algorithms are used in UNIX and LINUX for secure password hashing, in Bitcoin for proof-of-work.
As a motivating example, we consider how proof-of-work can be carried out through hash. Suppose that Alice receives a trove of valuable documents $x\in\{0, 1\}^n$, and Bob claims that he was the person producing and sending it. To prove his claim, he  sends Alice a tag $t\in\{0, 1\}^m$, which supposedly is the result of applying a
cryptographic hash function $h:\{0, 1\}^n\to\{0, 1\}^m$ on $x$. Alice simply checks if $t=h(x)$. Accept if yes, reject otherwise. By the collision resistance property, it is nearly impossible that Bob can produce $h(x)$ without knowing $x$. 

%\YY{the passwd example seems to be more accurately modeling the case that the verifier only has the hash, not the original message.}

In practice, there may be information leakage of the message over time due to information transmission, adversarial attacks, etc. Therefore, it is rather desirable if the hash function is resilient against information leakage. We ask: how many bits $\ell$ about the message $x$ can be leaked before the adversary is able to forge the tag $h(x)$ easily? 

Cleary, $\ell \le m$, since if the tag $h(x)$ itself is known to the adversary, he does not need to know more about $x$ to pass the verification. This is rather disappointing, since $m$ is typically much smaller than $n$. 
We then ask: what if a quantum tag is used instead? If the leakage is quantum, by the same reasoning, $m$ remains a trivial and rather lower upper-bound on $\ell$. This leads us to our central question: 
{\em Can a quantum hash function be much more resilient to {\em classical} leakage?}

\subsection{Quantum cryptographic hash functions}
By a ``quantum hash function,'' we simply mean a classical-to-quantum encoding $\phi: \{0, 1\}^n\to {\mathbb{C}}^{2^m}$ that maps $x\in\{0, 1\}^n$ to a pure $m$-qubit state $|\phi_x\rangle$. In a seminal paper, Buhrman et al.~\cite{buhrman2001quantum} introduced the notion of {\em quantum fingerprinting}. In their most general form, a quantum fingerprinting is the following.
\begin{definition}[Generalized Quantum Fingerprinting (Buhrman et al.~\cite{buhrman2001quantum})]\label{def:qfinger}
A function $\phi:\{0, 1\}^n\to{\mathbb{C}^{2^m}}$ is a $(n, m, \delta)$ (generalized) quantum fingerprinting where
 $\delta:=\max_{x,x': x\ne x'} \left|\langle\phi_x | \phi_{x'}\rangle\right|$.
\end{definition}

We use the convention that $\phi:=|\phi\rangle\langle\phi|$ represent the projector for the pure state $|\phi\rangle$.
If one replaces the predicate $h(x)=h(x')$ by the fidelity $F(\phi_x, \phi_{x'}) = \left|\langle\phi_x|\phi_{x'}\rangle\right|$, one sees that $\delta$ precisely quantify the extent of collision resistance. For concreteness, we define what we mean by quantum cryptographic hash function as follows. For a function $\delta_n\in(0, 1)$, we say $\delta_n$
is negligible in $n$ if $\delta_n\le 1/n^c$ for all $c>0$ and all sufficiently large $n$.

\begin{definition}\label{def:qhash}
A $(n, m, \delta)$ quantum fingerprinting $\phi$ is a {\em quantum cryptographic hash function}  if $\delta=\negl(n)$. 
\end{definition}

We note that while classical cryptographic hash functions necessarily rely on computational assumptions for security, their quantum counterparts can achieve the three security properties (1-3) information-theoretically.
We now proceed to formulate our leakage problem precisely. We consider average case
security and model classical side-channel information using a classical-classical (c-c) state,
called the {\em side information state},
$
\eta_{XY}:=\sum_{x,y} p_{x, y} |x\rangle\langle x|\otimes|y\rangle\langle y|$
on $\{0, 1\}^n\times\{0, 1\}^{n'}$. Here $X$ is uniformly distributed and $Y$ represents
the side information. The largest probability of correctly guessing $X$ conditioned on $Y$
is $p_g:=p_g(X|Y)_\eta:=\sum_{y} \max_x p_{x, y}$. The {\em conditional min-entropy} is
$H_{\text{min}}(X|Y)_\eta:=-\log p_g(X|Y)_\eta$. We quantify the amount of leakage by 
$k:=n-H_{\text{min}}(X|Y)_\eta$.

The adversary is given the $Y$ sub-system and creates a classical-quantum (cq) state $\rho_{XE}$,
called the {\em forgery state},
through local quantum operations on $Y$. 
The {\em verification scheme} for $\phi$ is the following measurement on $XE$:
$
V := \sum_{x}|x\rangle\langle x|\otimes\phi_x$.
The probability of the forgery state to pass the verification scheme is 
$e_s:=\Tr(V\rho)$. Given the leakage $\ell$, the optimal passing probability of a forgery state is
denoted by $e_s^*:=e_s^*(\ell)$.
We can now define security precisely.

\begin{definition}[Resilience against classical leakage]
A $(n, m, \delta)$ quantum cryptographic hash function $\phi$ is said to be {\em $\sigma$-resilient against $\ell$
bits of classical leakage} if for all forgery state $\rho$ obtained from $\ell$ bits of side information,
the probability of passing the verification scheme $e_s^*(\ell)\leq \sigma$. If no $\sigma$ is specified, it is assumed that $\sigma=\negl(n)$.
\end{definition}

\subsection{Main Result}
We show that quantum cryptographic hash functions can be extremely resilient to classical leakage. Our main theorem is informally stated below.

\begin{theorem}[Main Theorem]\label{thm:main}
For all $n$ and $k=n-\omega(\log n)$, 
all quantum cryptographic hash functions are resilient against $k$ bits of classical leakage.
\end{theorem}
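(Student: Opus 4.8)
The plan is to convert the forgery game into an eigenvalue optimization and then bound that eigenvalue using the fact that, for a quantum cryptographic hash, the Gram matrix of the hashes is close to the identity. For Step~1 I would note that the $Y$ register of $\eta_{XY}$ is classical, so any channel $\mathcal{N}\colon Y\to E$ the adversary applies is felt only through the states $\sigma_y:=\mathcal{N}(|y\rangle\langle y|)$; conversely every tuple $\{\sigma_y\}_y$ of $m$-qubit states is realizable, e.g.\ by the measure-and-prepare channel $\omega\mapsto\sum_y\langle y|\omega|y\rangle\,\sigma_y$. Writing $M_y:=\sum_x p_{x,y}\,\phi_x$ (a positive operator on the $m$-qubit space), one computes $e_s=\Tr(V\rho)=\sum_{x,y}p_{x,y}\Tr(\sigma_y\phi_x)=\sum_y\Tr(\sigma_y M_y)$, and since the $\sigma_y$ are chosen independently,
\[
e_s^*(k)=\sum_y\max_{\sigma_y}\Tr(\sigma_y M_y)=\sum_y\lambda_{\max}(M_y),
\]
where $\lambda_{\max}(\cdot)$ is the largest eigenvalue. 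It then suffices to bound $\sum_y\lambda_{\max}(M_y)$ under the constraints $\sum_y p_{x,y}=2^{-n}$ for every $x$ and $\sum_y\max_x p_{x,y}=p_g=2^{\,k-n}$.

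For Step~2, let $\Phi$ be the operator whose columns are the $|\phi_x\rangle$ and $G=\Phi^\dagger\Phi$ the $2^n\times 2^n$ Gram matrix, $G_{x,x'}=\langle\phi_x|\phi_{x'}\rangle$. Since the $|\phi_x\rangle$ are unit vectors with $\max_{x\ne x'}|\langle\phi_x|\phi_{x'}\rangle|=\delta$, we have $G=I+E$ with $E$ Hermitian, zero diagonal, and $|E_{x,x'}|\le\delta$. With $Q_y=\mathrm{diag}(p_{x,y})_x$ we have $M_y=\Phi Q_y\Phi^\dagger$, which has the same nonzero spectrum as $Q_y^{1/2}GQ_y^{1/2}=Q_y+Q_y^{1/2}EQ_y^{1/2}$. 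By Weyl's inequality and $\|\cdot\|\le\|\cdot\|_F$,
\[
\lambda_{\max}(M_y)\ \le\ \max_x p_{x,y}\ +\ \bigl\|Q_y^{1/2}EQ_y^{1/2}\bigr\|_F\ \le\ \max_x p_{x,y}\ +\ \delta\sqrt{\textstyle\sum_{x\ne x'}p_{x,y}p_{x',y}}\ \le\ \max_x p_{x,y}\ +\ \delta\,{\textstyle\sum_x}p_{x,y}.
\]

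For Step~3, summing over $y$ and using the two constraints gives $e_s^*(k)\le p_g+\delta=2^{\,k-n}+\delta$. When $k=n-\omega(\log n)$ we have $2^{\,k-n}=2^{-\omega(\log n)}=\negl(n)$, and $\delta=\negl(n)$ because $\phi$ is a quantum cryptographic hash function; a sum of two negligible functions is negligible, so $e_s^*(k)=\negl(n)$, which is precisely resilience against $k$ bits of classical leakage.

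I expect Step~2 to be the crux. The naive estimate $\lambda_{\max}(M_y)\le(\max_x p_{x,y})\,\lambda_{\max}(G)$ is useless: if exponentially many $x$ have $p_{x,y}>0$ then $\lambda_{\max}(G)$ can be as large as $\Theta(2^n\delta)$, and $2^n\delta$ need not be small even when $\delta$ is negligible. The key idea is to split off the diagonal part $Q_y$ exactly and bound only the ``interference'' matrix $Q_y^{1/2}EQ_y^{1/2}$ in Frobenius norm: its entries carry the weights $p_{x,y}p_{x',y}$, whose total is $(\sum_x p_{x,y})^2$ regardless of how many $x$ contribute, and the global budget $\sum_y\sum_x p_{x,y}=1$ then caps the total interference at $O(\delta)$. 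The remaining ingredients---the measure-and-prepare reduction in Step~1, the identity $\max_\sigma\Tr(\sigma M)=\lambda_{\max}(M)$, and the fact that $2^{-\omega(\log n)}$ is negligible in the paper's sense---are routine and I would only sketch them.
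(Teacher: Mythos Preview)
Your argument is correct and reaches the same inequality $e_s^*\le p_g+\delta$ as the paper, but by a genuinely different route. The paper does not work with the raw joint distribution $p_{x,y}$; it first proves a structural lemma (their Lemma~\ref{lem:ccsimp}) showing that any classical side information can be replaced, without changing $p_g$ or the reachable forgery states, by side information of the canonical form ``$X$ is uniform on the subset $S$''. Under that reduction the operator whose top eigenvalue must be bounded becomes the unweighted sum $\sum_{x\in S}\phi_x$, and the paper then invokes the Cotlar--Stein lemma $\lambda_{\max}\bigl(\sum_{x\in S}\phi_x\bigr)\le 1+(|S|-1)\delta$, proved via Schatten $p$-norms and $p\to\infty$. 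Your approach bypasses both ingredients: you keep the general weights $p_{x,y}$, pass to the conjugate $Q_y^{1/2}GQ_y^{1/2}$, split off the diagonal exactly, and control the off-diagonal ``interference'' by the Frobenius norm. This is more elementary (no combinatorial reduction on the side information, no asymptotic norm argument) and works in one shot for arbitrary $p_{x,y}$; the price is a marginally weaker constant, $p_g+\delta$ versus the paper's $(1-\delta)p_g+\delta$, which is irrelevant for the negligibility conclusion. Conversely, the paper's decomposition into uniform-on-$S$ pieces is what lets them phrase the stronger Separation Lemma in terms of the smooth conversion parameter $p_\downarrow^\epsilon$, which they reuse for the extractor separation in Section~\ref{sec:re}; your direct eigenvalue bound proves Theorem~\ref{thm:main} cleanly but does not immediately yield that finer statement.
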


Buhrman et al.~\cite{buhrman2001quantum} showed that for all $n$ and $\delta\in(0, 1)$, there exists
a $(n, m, \delta)$ quantum fingerprinting for $m=\log n+ O(\log1/\delta)$ for which explicit constructions
can be derived from~\cite{naor1993small}. We thus have the following corollary.
\begin{corollary}
For all $n$, $k=n-\omega(\log n)$, and $m=\omega(\log n)$, there exist efficient quantum cryptographic hash functions
resilient to leaking $k$ bits of information. 
\end{corollary}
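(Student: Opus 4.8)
The plan is to derive the corollary by feeding an explicit family of quantum fingerprintings into Theorem~\ref{thm:main}. First I would recall the Buhrman et al.\ construction: for every $\delta \in (0,1)$ there is a $(n, m, \delta)$ quantum fingerprinting with $m = \log n + O(\log(1/\delta))$, and---crucially for the ``efficient'' claim---the encoding $x \mapsto |\phi_x\rangle$ can be taken to be a uniform superposition over a pseudorandom set of coordinates of an asymptotically good error-correcting code evaluated at $x$, where the pseudorandom set comes from an explicit small-bias sample space \`a la Naor--Naor~\cite{naor1993small}. This makes $|\phi_x\rangle$ preparable by a $\mathrm{poly}(n)$-time (and $\mathrm{poly}(m)$-size) quantum circuit, and the inner product of two distinct fingerprints is controlled by the bias parameter.

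Next comes the asymptotic bookkeeping, which is where the hypothesis $m = \omega(\log n)$ is used. Since $m/\log n \to \infty$, we have $m - \log n = (\log n)\bigl(m/\log n - 1\bigr) = \omega(\log n)$, so we may pick $\delta$ with $\log(1/\delta) = \Theta(m - \log n) = \omega(\log n)$; then the construction uses at most $m$ qubits (pad with unused qubits to reach exactly $m$, which does not change any inner product), and $\delta \le 2^{-\omega(\log n)}$, hence $\delta \le n^{-c}$ for every constant $c$ and all sufficiently large $n$, i.e.\ $\delta = \negl(n)$. By Definition~\ref{def:qhash} the resulting $\phi$ is therefore a genuine quantum cryptographic hash function, and by the previous paragraph it is efficient.

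Finally I would invoke Theorem~\ref{thm:main} with $k = n - \omega(\log n)$: every quantum cryptographic hash function is resilient against $k$ bits of classical leakage, so in particular the $\phi$ constructed above satisfies $e_s^*(k) = \negl(n)$, which is exactly the assertion of the corollary. I do not anticipate a genuine obstacle, since all the real content sits in Theorem~\ref{thm:main} and in the cited constructions; the only points needing care are checking that the $O(\log(1/\delta))$ overhead in the Buhrman et al.\ bound leaves enough slack for $\delta$ to be negligible (not merely inverse-polynomial) precisely when $m = \omega(\log n)$, and that ``efficient'' is backed by an \emph{explicit} small-bias set rather than a counting argument, which is what~\cite{naor1993small} supplies.
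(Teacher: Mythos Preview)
Your proposal is correct and follows exactly the route the paper takes: invoke the Buhrman et al.\ fingerprinting with $m=\log n+O(\log(1/\delta))$ (explicit via~\cite{naor1993small}), use $m=\omega(\log n)$ to choose $\delta=\negl(n)$, and then apply Theorem~\ref{thm:main}. The paper's own argument (the sentence before the corollary, and the proof of Theorem~\ref{thm:main_exist}) is the same, only less explicit about the asymptotic bookkeeping and the padding step you spell out.
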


One drawback of the verification scheme is that the verifier has to get access to full information about the original message $X$ in order to perform the verification. In some cases this would be a heavy burden on the verifier. One natural question to ask is that if it is possible to develop a lightweighted verification scheme where the verifier does not need to read the whole message. More formally, let the verifier now receive $k$ qubits of {\em advice state} and $m$ bits of the {\em
    forgery state} provided
by the adversary. An $(n,k,m)$ verification scheme $V$ would then be a joint measurement on the advice state together with the forgery state. This generalizes the original verification where $k=n$ and $V=\sum_{x}|x\rangle\langle x|\otimes\phi_x$.

Out next result shows that, by increasing the hash a little bit we can dramatically reduce the size of system needed by the verifier:

\begin{theorem}
    For all $n$, fix $k=m=\omega(\log^2 n), \ell\leq n-\omega(\log n)$. There exists a verification scheme $V$ acting on $k+m$ qubits, together with an ensemble of $k$-qubit states $\{\rho_x\}$ such that 
    $$\sup_{Y:H_{\min}(X|Y)\leq \ell}\sup_{\{\sigma_Y\}}\mathbb{E}_{XY}\Tr[V(\rho_X\otimes \sigma_Y)]\leq \negl(n).$$
\end{theorem}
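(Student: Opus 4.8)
The plan is to make the verifier's $k$-qubit advice and the honest prover's $m$-qubit forgery \emph{identical} tensor powers of a cheap fingerprint, and to let $V$ be the AND of blockwise SWAP tests; parallel repetition is what upgrades the $\tfrac12$ soundness of a single SWAP test to a negligible bound. Concretely, fix $m'=\Theta(\log n)$ and an $(n,m',\delta)$ quantum fingerprinting $x\mapsto|\phi_x\rangle$ whose collision parameter $\delta$ is a \emph{constant} strictly below $1$ (no negligibility needed, and explicit, e.g.\ from \cite{buhrman2001quantum,naor1993small}). Put $r:=m/m'=\omega(\log n)$, let the advice be $\rho_x:=\phi_x^{\otimes r}$, and have the honest prover return $\phi_x^{\otimes r}$ as well. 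Viewing both the $k$ advice qubits and the $m$ forgery qubits as $r$ blocks of $m'$ qubits, $V$ runs the SWAP test between block $i$ of the advice and block $i$ of the forgery for each $i\in[r]$ and accepts iff all $r$ tests accept. Completeness is then automatic — the SWAP test on two copies of a pure state accepts with certainty — so an honest prover always passes, and the content of the theorem is the soundness bound.

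For soundness, first absorb the advice. Blockwise, $\Tr_A\bigl[\tfrac{I+\mathrm{SWAP}}{2}(\phi_x\otimes I)\bigr]=\tfrac{I+\phi_x}{2}$, so substituting $\rho_x=\phi_x^{\otimes r}$ into $V$ leaves the POVM element $W_x:=\bigl(\tfrac{I+\phi_x}{2}\bigr)^{\otimes r}$ acting on the forgery alone. Since $Y$ is classical, the adversary's forgery is a fixed state $\sigma_y$ for each value $y$, so, with $p_{x\mid y}:=p_{x,y}/p_y$,
\[
\mathbb{E}_{XY}\Tr[V(\rho_X\otimes\sigma_Y)]=\sum_y p_y\,\Tr\Bigl[\bigl({\textstyle\sum_x p_{x\mid y}W_x}\bigr)\sigma_y\Bigr]\ \le\ \mathbb{E}_y\,\lambda_{\max}\bigl({\textstyle\sum_x p_{x\mid y}W_x}\bigr),
\]
and passing to the largest eigenvalue $\lambda_{\max}$ already accounts for every forgery, including ones entangled across the $r$ blocks. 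Now expand $W_x=2^{-r}\sum_{S\subseteq[r]}\phi_x^{\otimes S}\otimes I_{\bar S}$; subadditivity of $\lambda_{\max}$ on positive operators gives
\[
\lambda_{\max}\bigl({\textstyle\sum_x p_{x\mid y}W_x}\bigr)\ \le\ 2^{-r}\sum_{j=0}^{r}\binom{r}{j}\mu_j(y)\ =\ \mathbb{E}_{j\sim\mathrm{Bin}(r,1/2)}\bigl[\mu_j(y)\bigr],\qquad \mu_j(y):=\lambda_{\max}\bigl({\textstyle\sum_x p_{x\mid y}\,\phi_x^{\otimes j}}\bigr),
\]
so the $\tfrac12$ ``partial credit'' of each SWAP test is exactly amortized by the binomial weights. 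The fingerprint is used only to bound $\mu_j(y)$: $\sum_x p_{x\mid y}\phi_x^{\otimes j}$ is a density operator whose purity is $\sum_{x,x'}p_{x\mid y}p_{x'\mid y}|\langle\phi_x|\phi_{x'}\rangle|^{2j}\le p_{\max}(y)+\delta^{2j}$ with $p_{\max}(y):=\max_x p_{x\mid y}$, hence $\mu_j(y)\le\sqrt{p_{\max}(y)+\delta^{2j}}\le\sqrt{p_{\max}(y)}+\delta^{j}$ — essentially the estimate already behind Theorem~\ref{thm:main}.

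Taking expectations, using $\mathbb{E}_{j\sim\mathrm{Bin}(r,1/2)}[\delta^{j}]=\bigl(\tfrac{1+\delta}{2}\bigr)^{r}$, Jensen's inequality, and $\mathbb{E}_y[p_{\max}(y)]=p_g=2^{-H_{\min}(X|Y)}$, we obtain
\[
\mathbb{E}_{XY}\Tr[V(\rho_X\otimes\sigma_Y)]\ \le\ 2^{-H_{\min}(X|Y)/2}+\bigl(\tfrac{1+\delta}{2}\bigr)^{r}.
\]
Since $\delta$ is bounded away from $1$ and $r=\omega(\log n)$, the second term is $\negl(n)$; since the hypothesis $\ell\le n-\omega(\log n)$ means $H_{\min}(X|Y)\ge n-\ell=\omega(\log n)$ for every admissible $Y$, the first term is $\negl(n)$; and as both estimates are uniform over $Y$ and over $\{\sigma_Y\}$, so is the supremum. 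The length budget checks out exactly: $k=m=rm'=\omega(\log n)\cdot\Theta(\log n)=\omega(\log^2 n)$.

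I expect the crux to be the amplification step: a single SWAP test is useless here because it accepts with probability $\ge\tfrac12$ on every forgery, so one must see (i) that $r=\omega(\log n)$ independent parallel blocks are the right remedy, (ii) that expanding $W_x$ with binomial weights collapses the per-block partial credit into the geometrically decaying factor $\bigl(\tfrac{1+\delta}{2}\bigr)^{r}$, and (iii) that the operator-norm/purity estimate on $\mu_j(y)$ — the only place the fingerprinting structure enters, and automatically robust to block-entangled forgeries — suffices. A minor but necessary design choice is keeping $\delta$ a constant bounded away from $1$ (not negligible) and $m'=\Theta(\log n)$, so that $r=\omega(\log n)$ blocks fit within $m=\omega(\log^2 n)$ qubits while $\bigl(\tfrac{1+\delta}{2}\bigr)^{r}$ stays negligible.
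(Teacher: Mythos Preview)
Your proof is correct and takes a genuinely different route from the paper's. Both arguments use the same skeleton: take $r$ copies of a fingerprint as advice, run $r$ blockwise SWAP tests, and expand $\bigl(\tfrac{I+\mathrm{SWAP}}{2}\bigr)^{\otimes r}$ over subsets $T\subseteq[r]$. The difference is in how the subset terms are bounded. The paper chooses $\delta=\negl(n)$ (so each block is itself a quantum \emph{cryptographic} hash, forcing $m'=\omega(\log n)$), then for every nonempty $T$ throws away all but one coordinate and invokes the Separation Lemma (hence Cotlar--Stein) to get $\mathbb{E}_{XY}\Tr[\phi_X\mu_Y^i]\le p_g+\delta$; the final bound is $2^{-r}+(p_g+\delta)$. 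You instead keep track of $|T|=j$, use the elementary purity estimate $\lambda_{\max}\le\sqrt{\Tr[\,\cdot\,]^2}$ to obtain $\mu_j(y)\le\sqrt{p_{\max}(y)}+\delta^j$, and then let the binomial average collapse the $\delta^j$'s into $\bigl(\tfrac{1+\delta}{2}\bigr)^r$. This buys you two things: the argument is more elementary (no Cotlar--Stein, no Separation Lemma), and it works with any constant $\delta<1$, so $m'=\Theta(\log n)$ suffices and the fingerprint need not be a cryptographic hash. The price is a square-root loss, $\sqrt{p_g}$ in place of $p_g$, which is immaterial here since $H_{\min}(X|Y)=\omega(\log n)$. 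Both approaches land at the same parameter regime $k=m=\omega(\log^2 n)$.
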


Arunachalam et al.~\cite{arunachalam2016optimal} showed that $\Omega(n)$ copies of a quantum cryptographic hash based on linear codes is necessary to recover the original $n$-bit classical message, regardless of the length of the hash itself. Our result shows the complimentary aspect that only $\omega(\log n)$ copies are sufficient to ensure that the prover holds the classical message.

\commentout{Note here that we can hash a message into near logarithmic number of qubits, with resiliency to any amount of classical information leakage of interest. Specifically we have the following separation lemma:
}
Our central technical result is the following. Recall that $p_g:=2^{\ell-n}$ is the optimal guessing probability of the message conditioned on the $\ell$-bit side information.

\begin{lemma}[Lemma~\ref{lem:sepr}, informally stated]
    \label{lem:sep} For any $(n, m, \delta)$ quantum fingerprinting $\phi$ and any leakage of $\ell$ classical bits, the probability of the forgery state passing the verification scheme satisfies   $$e_s\leq p_g +\delta.$$
\end{lemma}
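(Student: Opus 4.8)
The plan is to bound the passing probability by a sum, over values of the side information, of operator norms of weighted mixtures of the fingerprint projectors, and then to control each such norm by passing to the Gram matrix of the fingerprints, where near-orthogonality can be exploited.

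First I would unfold the forgery state. Because $\eta_{XY}$ is a classical-classical state and the adversary produces $E$ by applying a quantum channel to the classical register $Y$, the forgery state must have the form $\rho_{XE} = \sum_{x,y} p_{x,y}\,|x\rangle\langle x|\otimes\xi_y$, where $\xi_y$ is the fixed density operator the channel outputs on input $|y\rangle\langle y|$. Substituting into $e_s = \Tr(V\rho_{XE})$ with $V = \sum_x |x\rangle\langle x|\otimes\phi_x$ gives
$$ e_s \;=\; \sum_{x,y} p_{x,y}\,\Tr(\phi_x\,\xi_y) \;=\; \sum_y \Tr(M_y\,\xi_y), \qquad M_y := \sum_x p_{x,y}\,\phi_x . $$
Since each $\xi_y$ is a state, $\Tr(M_y\xi_y)\le\|M_y\|$, hence $e_s\le\sum_y\|M_y\|$. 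As at most $\ell$ bits are leaked, $\sum_y\max_x p_{x,y} = 2^{-H_{\min}(X|Y)_\eta}\le 2^{\ell-n} = p_g$, so it suffices to prove the per-$y$ estimate $\|M_y\|\le\max_x p_{x,y} + \delta\,\Pr[Y=y]$; summing over $y$ then yields $e_s\le p_g+\delta$.

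Next I would bound $\|M_y\|$ by moving to the Gram picture. Put $T_y := \sum_x\sqrt{p_{x,y}}\,|\phi_x\rangle\langle x|$, so $M_y = T_yT_y^\dagger$ and $\|M_y\| = \|T_y^\dagger T_y\|$, where $T_y^\dagger T_y$ has $(x,x')$ entry $\sqrt{p_{x,y}p_{x',y}}\,\langle\phi_x|\phi_{x'}\rangle$. Split $T_y^\dagger T_y = D_y + F_y$ into its diagonal part $D_y = \mathrm{diag}(p_{x,y})_x$ (using $\langle\phi_x|\phi_x\rangle=1$) and off-diagonal part $F_y$. Then $\|D_y\| = \max_x p_{x,y}$, while for $F_y$ (Hermitian) and any unit vector $|\psi\rangle = \sum_x\psi_x|x\rangle$,
$$ \bigl|\langle\psi|F_y|\psi\rangle\bigr| \;\le\; \delta\sum_{x\ne x'}|\psi_x|\,|\psi_{x'}|\sqrt{p_{x,y}p_{x',y}} \;\le\; \delta\Bigl(\sum_x|\psi_x|\sqrt{p_{x,y}}\Bigr)^{\!2} \;\le\; \delta\sum_x p_{x,y} \;=\; \delta\,\Pr[Y=y], $$
the last step being Cauchy--Schwarz with $\sum_x|\psi_x|^2 = 1$. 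Hence $\|F_y\|\le\delta\Pr[Y=y]$, and $\|M_y\|\le\|D_y\|+\|F_y\|$ gives the claimed per-$y$ bound.

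The unfolding of the forgery state and the identity $\|M_y\| = \|T_y^\dagger T_y\|$ are routine. The step that needs care is keeping the error term proportional to $\delta$ rather than to $2^n\delta$: the fingerprint projectors live in a $2^m$-dimensional space and their unweighted sum can have operator norm as large as $1+(2^n-1)\delta$, so a crude bound on $\|M_y\|$ is hopeless. Passing to the weighted Gram matrix fixes this, because its off-diagonal entries are individually at most $\delta\sqrt{p_{x,y}p_{x',y}}$ and the rank-one Cauchy--Schwarz estimate above converts this into an operator-norm bound of $\delta\Pr[Y=y]$, whose sum over $y$ is exactly $\delta$. I expect this to be the only genuinely non-mechanical point; everything else is bookkeeping with the definition of the conditional min-entropy.
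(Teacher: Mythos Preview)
Your proof is correct, and it takes a genuinely different route from the paper's.

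The paper first invokes a structural lemma (Lemma~\ref{lem:ccsimp}) to reduce any classical side information $Y$ to a canonical form in which $Y$ merely names a subset $S\subseteq\X$ and $X$ is uniform on $S$; this yields $p_g=\sum_S p_S$ and $\sum_S p_S|S|=1$. It then bounds $e_s\le\sum_S p_S\lambda_{\max}\bigl(\sum_{x\in S}\phi_x\bigr)$ and applies the Cotlar--Stein Lemma to get $\lambda_{\max}\bigl(\sum_{x\in S}\phi_x\bigr)\le 1+(|S|-1)\delta$, from which $e_s\le(1-\delta)p_g+\delta$ follows by the two normalization identities.

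You bypass both ingredients. Working directly with the raw weights $p_{x,y}$, you pass to the Gram matrix $T_y^\dagger T_y$ and split it as diagonal plus off-diagonal; a single Cauchy--Schwarz estimate bounds the off-diagonal part by $\delta\Pr[Y=y]$. Summing the per-$y$ bound $\|M_y\|\le\max_x p_{x,y}+\delta\Pr[Y=y]$ gives $e_s\le p_g+\delta$ with no preprocessing of the side information and no appeal to Cotlar--Stein.

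What each approach buys: your argument is shorter and more elementary, and it makes transparent that the $\delta$ term comes from the total off-diagonal mass of a weighted Gram matrix. The paper's route yields the marginally sharper constant $e_s\le(1-\delta)p_g+\delta$, and its structural lemma~\ref{lem:ccsimp} is reused elsewhere (in characterizing the smooth conversion parameter, Lemma~\ref{lem:cgp}), so the extra machinery is not wasted in the paper's broader context. For the informal statement as asked, your bound is exactly what is needed.
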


This implies that
\begin{align}\label{eqn:equiv}
O(p_g + \delta^2) \le e^*_s\leq p_g + \delta,
\end{align}
by considering the two cheating strategies of guessing $X$ first (then applying $\phi$) and using a fixed fingerprint state. Consequently, when $\delta=\negl(n)$, the above inequality means that $e^*_s$ is negligible if and only if $p_g$ is negligible. 

Since $\ell=n-O(\log n)$ is the threshold for $p_g(\ell)$ to be non-negligible, the bounds~(\ref{eqn:equiv}) show for  quantum cryptographic hash functions, the leakage resilience can approach the {\em maximum} of  $n-O(\log n)$ bits. 

One counterpart of this result is shown in~\cite{arunachalam2016optimal}, saying that $\Omega(n)$ copies of quantum fingerprints would be necessary for an adversary to recover the original message with non-negligible probability. Thus, the quantum cryptographic hash functions based on fingerprinting have the following property: The hash itself is efficiently computable, but it is information-theoretically resilient to recovery of the message from the hash (which requires
$\Omega(n)$
copies of the hash) and to recovery of the hash from partial information of the message.

\subsection{Implications on quantum-proof randomness extraction}
Our result reveals some stark contrast between quantum and classical side information. This difference shows the difficulty for establishing the quantum security of classical-proof extractors. Roughly speaking, a randomness extractor is a randomized algorithm which turns a weakly random source into near uniform~\cite{nisan1992pseudorandom,nisan1996randomness,trevisan2001extractors}. These are fundamental objects with a wide range of applications in computational complexity, cryptography, and other areas~\cite{bennett1985reduce,dodis2009non,goldreich1997another,sudan1999pseudorandom,moshkovitz2014parallel}.
In particular, they accomplish the important tasks of privacy amplication~\cite{bennett1995generalized,bennett1985reduce,maurer1993secret}, by decoupling the correlation between the output and the side information.

A major open problem in randomness extraction is whether every classical-proof randomness extractor for $k$ min-entropy sources is secure against quantum adversaries with comparable amount but quantum side information. Loss of parameter is already shown to be inevitable in~\cite{gavinsky2007exponential}, but possibilities still remain in the case where the ranges of parameters are relavant to most typical applications. 

If all quantum side information can be constructed from a comparable amount of classical side information, we would have resolved this major problem positively. Our result shows that this approach would necessarily fail.
For details, see Section~\ref{sec:re}.

\begin{theorem}[Theorem~\ref{thm:topsep}, informally stated]
    There exists a family of classical-quantum states with arbitrarily small amount of quantum side information, yet
    these quantum states cannot be approximately constructed from classical side information that is almost a perfect copy of the classical message.
\end{theorem}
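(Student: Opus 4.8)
The plan is to realize the promised family of classical--quantum states as the canonical cq-state attached to a quantum cryptographic hash function, and then to read off the impossibility of a classical simulation almost directly from the central Lemma~\ref{lem:sep}.

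First I would fix, for each $n$, an $(n,m,\delta)$ quantum cryptographic hash function $\phi$ with $\delta=\negl(n)$ and with $m$ as small as we like among $\omega(\log n)$ functions (say $m=\lceil\log^2 n\rceil$); such $\phi$ exist by the Buhrman et al.\ construction, which gives $m=\log n+O(\log 1/\delta)$. Set $\rho_{XE}:=2^{-n}\sum_{x}|x\rangle\langle x|_X\otimes(\phi_x)_E$, a cq-state with $X$ uniform on $\{0,1\}^n$. The reason for taking $m$ small is that $E$ then carries only a minuscule amount of quantum side information: since $E$ is an $m$-qubit register, $H_{\min}(X|E)_\rho\ge H_{\min}(X)_\rho-m=n-m=n-\omega(\log n)$, so conditioning on $E$ erodes the min-entropy of $X$ by at most polylogarithmically many bits --- precisely the regime a quantum-proof extractor is expected to tolerate.

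Next I would show that no classical side information leaving a non-negligible amount of min-entropy can be post-processed into $\rho_{XE}$. Let $\eta_{XY}=\sum_{x,y}p_{x,y}|x\rangle\langle x|\otimes|y\rangle\langle y|$ be any c-c side information state with $X$ uniform and $H_{\min}(X|Y)_\eta=\omega(\log n)$ (equivalently, fewer than $n-\omega(\log n)$ bits of leakage), and let $\Lambda$ be any channel from $Y$ to the $m$-qubit register $E$ --- composing with a partial trace if $\Lambda$ a priori outputs more qubits. Then $\sigma_{XE}:=(\mathrm{id}_X\otimes\Lambda)(\eta_{XY})$ is exactly a forgery state produced from $\ell:=n-H_{\min}(X|Y)_\eta$ bits of classical leakage. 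Applying the verification measurement $V=\sum_x|x\rangle\langle x|\otimes\phi_x$: on one hand $\Tr(V\rho_{XE})=2^{-n}\sum_x\Tr(\phi_x\phi_x)=1$, while on the other hand Lemma~\ref{lem:sep} gives $\Tr(V\sigma_{XE})\le p_g+\delta$ with $p_g=2^{-H_{\min}(X|Y)_\eta}=\negl(n)$, hence $\Tr(V\sigma_{XE})=\negl(n)$. Since $0\le V\le I$, this forces $\|\rho_{XE}-\sigma_{XE}\|_1\ge|\Tr(V\rho_{XE})-\Tr(V\sigma_{XE})|\ge 1-\negl(n)$, so $\sigma_{XE}$ is not even a constant-accuracy approximation of $\rho_{XE}$.

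Putting the two halves together yields the theorem: $\rho_{XE}$ has quantum side information that drops the min-entropy by at most $m=\mathrm{polylog}(n)$, yet reproducing $\rho_{XE}$ from classical side information --- to within any error bounded away from $1$ --- forces $H_{\min}(X|Y)=O(\log n)$, i.e.\ $Y$ must pin $X$ down to polynomially many candidates, ``almost a perfect copy.'' I expect the only delicate point to be the bookkeeping that makes this contrast genuine rather than cosmetic: one must (i) keep the two conditioning registers and their entropy parameters cleanly separated, (ii) observe that the $E$-register can genuinely be taken to be only polylogarithmically many qubits (this is where the Buhrman et al.\ parameters enter), and (iii) invoke Lemma~\ref{lem:sep} in exactly its intended setting --- forgery states generated from \emph{classical} leakage --- so that the negligible bound $p_g+\delta$ legitimately applies. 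Past that, the argument is simply that $\rho_{XE}$ passes a test that every classically-simulated state fails.
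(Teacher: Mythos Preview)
Your proposal is correct and follows essentially the same route as the paper: build $\rho_{XE}$ from a quantum fingerprinting $\phi$ with small pairwise overlap, then invoke the Separation Lemma to show that any classically-generated $\sigma_{XE}$ fails the verification test $V=\sum_x|x\rangle\langle x|\otimes\phi_x$ unless the classical side information already nearly determines $X$. The trace-distance lower bound via $|\Tr V(\rho-\sigma)|$ is exactly the mechanism behind the paper's formal inequality $1-\epsilon/2\le(1-\delta)p_\downarrow^\epsilon+\delta$.

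The one substantive difference is in the parameters. The paper fixes a \emph{constant} overlap $\delta=0.02$ (via Johnson--Lindenstrauss), which keeps $\dim E$ down to $O(\log n)$ qubits and yields the sharper conclusion that even $\eta_{XY}$ with $H_{\min}(X|Y)\ge 2$ cannot get within trace distance $0.98$ of $\rho$. You take $\delta=\negl(n)$, which costs you $m=\mathrm{polylog}(n)$ qubits on $E$ but buys you $\|\rho-\sigma\|_1\ge 1-\negl(n)$ whenever $H_{\min}(X|Y)=\omega(\log n)$. Neither statement dominates the other; both are legitimate formalizations of the informal claim, and the underlying argument is the same.
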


\subsection{Sketch of Proofs}
To prove Separation Lemma~\ref{lem:sep},  we first answered the following question (see Section~\ref{sec:cgp}): 
{\em Given a c-q state, how much classical side information is needed to approximate it up to a given error $\epsilon$?}
It turns out that this quantity, denoted by the \emph{conversion parameter}, can be simplified to a fairly nice expression. An important step is to show (in Lemma~\ref{lem:ccsimp}) that  classical information can without loss of generality be identifying a subset of the messages, and conditioned on this information, the messages are uniformly distributed on that subset. 
At the end, the proof reduces to estimating the operator norm of  $\sum_iT_i$, where $T_i$'s are projections to the fingerprint states, thus pairwise almost orthogonal. Cotlar-Stein Lemma gives us the desired bound. %\YY{packing is well known.}

\subsection{Related works.}
As mentioned above, Buhrman et al.~\cite{buhrman2001quantum} introduced the notion and provide the constructions of quantum fingerprinting. The application they focused on is for message identification. For our cryptographic applications, we are primarily interested in instances of a negligible fidelity. They did not discuss properties of quantum fingerprinting in an adversarial context like ours. That quantum fingerprinting satisfies the security properties of cryptographic hash functions was observed and explored by~\cite{ablayev2013cryptographic,vasiliev2014cryptographic}. 

Side-channel attack is a major paradigm studied in the classical information security and cryptography community due to its high level of threat in practice~\cite{side-key,naor2009public,akavia2009simultaneous,dodis2010efficient,brier2002weierstrass,joye2001hessian}. Side-channel key recovery attack has in particular drawn much attention~\cite{side-key}. However, these classical works address problems that necessarily require computational assumptions and many other works focus on the hardware aspects. To the best of our knowledge, this work appears to be the first studying information theoretical security of quantum cryptography against classical side-channel attack.

\newpage
%!TEX root =  HuangS16_QIP17.extended_abstract.tex

\newcommand{\X}{\mathcal{X}}
\newcommand{\Y}{\mathcal{Y}}
\newcommand{\St}{\mathcal{S}}
\newcommand{\I}{\mathcal{I}}
\newcommand{\hro}{\hat{\rho}}
\newcommand{\Supp}{\mathrm{Supp}}
\newtheorem{conj}{Conjecture}

\section{Preliminaries}
In the subsequent subsections, we summarize the necessary background on quantum information, probability theory and operator theory. We assume that readers are familiar with linear algebra. We refer to~\cite{nielsen2010quantum} for more detailed knowledge on quantum information and quantum computation.
\subsection{Quantum Information}
\paragraph{Quantum States.} The state space of a $k$-qubit system can be characterized as a $2^k$-dimensional Hilbert space $\mathcal{H}\cong\mathbb{C}^{2^k}$. A $k$-qubit quantum state is described by a density operator $\rho$ on $\mathcal{H}$, i.e.\ a positive semidefinite operator with unit trace. We often use $\rho\succcurlyeq 0$ to denote that $\rho$ is positive semidefinite. The set of all density matrices on $\mathcal{H}$ is then denoted by $\mathcal{S}_{=}(\mathcal{H}):=\{\rho\succcurlyeq 0|\Tr[\rho]=1\}$. Sometimes we are
        also interested in the set of subnormalized states $\mathcal{S}_{\leq }(\mathcal{H}):=\{\rho\succcurlyeq 0|\Tr[\rho]\leq 1\}$.

        Let $L(\mathcal{H})$ denote the set of all linear operators on $\mathcal{H}$ onto itself. For two operators $A,B\in L(\mathcal{H})$, one can define the inner product $\langle A,B\rangle$ to be $\Tr[A^\dag B]$, $A^\dag$ being the adjoint conjugate of $A$. Also we denote the identity operator by $I$.

        The state space for a multipartite system is the tensor of the state space of each individual system, i.e.\ a general multipartite state $\rho_{ABC}$ can be described by a density operator in $\mathcal{S}_{=}(\mathcal{A}\otimes \mathcal{B}\otimes \mathcal{C})$. For such a state $\rho$, the marginal state on a subsystem, say $\rho_A$, is the partial trace over the other subsystems, namely $\rho_A=\Tr_{BC}[\rho_{ABC}]$.

        A \emph{classical-quantum-}, or cq-state is a state of the form
        $$\rho_{XE}=\sum_{x}|x\rangle\langle x|\otimes \rho_x,$$
        where $\{|x\rangle\}$ are orthonormal and $\rho_x$'s are subnormalized states which sum up to a density matrix. This is to say, the bipartite state $\rho_{XE}$ is classical on the $X$ side while quantum on the $E$ side. Furthermore, if the $E$ side is also classical, then the state is of the form
        $$\rho_{XE}=\sum_{x,y}p(x,y)|x\rangle\langle x|\otimes |y\rangle\langle y|.$$
        Such a state is called a \emph{classical-classical-} or cc-state. Note that cc-states can be identified as classical joint distributions.
    \paragraph{Quantum Measurement.}
        The most general type of quantum measurement is \emph{Positive-operator-valued measurement (POVM)}. A POVM is a set of measurements $\mathcal{M}=\{M_o\}_{o\in O}$, where $O$ is the set of all outcomes, $M_o$ is positive semidefinite for all $o$ and $\sum_{o\in O}M_o=I$. For a state $\rho$, the probability that the outcome is $o$ under the measurement $\mathcal{M}$ is $\Tr[M_o\rho]$.
    \paragraph{Quantum Channel.}
        A physically realizable quantum channel $\mathcal{C}_{A\rightarrow B}$ is a completely positive, trace preserving linear map from $L(\mathcal{A})$ to $L(\mathcal{B})$.

        Sometimes we may consider more restrictive quantum channels, such as classical-to-quantum channels and classical-to-classical channels (note that quantum-to-classical channels can be identified as measurements). A classical-to-quantum channel $\mathcal{C}_{A\rightarrow B}$ can be written as
        $$\mathcal{C}(\cdot)=\sum_{a}\langle a| \cdot|a\rangle \cdot\rho_a,$$
        where $\rho_a\in\mathcal{S}_{=}(\mathcal{B})$. When $B$ is also classical, the channel conincides with the standard definition of a channel in classical information theory, specified by the conditional distribution $p_{B|A}$.
\subsection{Probability Theory}
    \paragraph{Random Source.} A distribution $X$ over a finite set $\X$ is also called a random source in this work. Specifically, denote $U_{\X}$ the uniform distribution over the set $\X$. For a random source $X$, we can define \emph{support set} 
        $$\Supp(X):=\{x\in\X|\Pr[X=x]>0\},$$
        and \emph{guessing probability} 
        $$p_g(X)=\max_{x}\Pr[X=x].$$
        The operational definition of guessing probability is as follows: if one is asked to guess the value generated by the random source $X$ without any side information, the optimal strategy is to guess the most probable one, and the winning probability would be $p_g(X)$.
    \paragraph{Statistical Distance.} For two distributions $X_1, X_2$ over a finite set $\X$, the statistical distance is defined to be
        $$\Delta(X_1,X_2):=\frac{1}{2}\sum_x|\Pr[X_1=x]-\Pr[X_2=x]|=\max_{S\subseteq \X}\Pr[X_1\in S]-\Pr[X_2\in S].$$
        The operational definition of statistical distance is as follows: if one is asked to tell whether a given message $x$ is generated from source $X_1$ or $X_2$, the success probability under optimal strategy is $\frac{1+\Delta(X_1,X_2)}{2}$.
    \paragraph{Random Source with Classical Side Information.} For a joint distribution $XY$ over a set $\X\otimes \Y$, the \emph{guessing probability} of $X$ conditioned on $Y$ is given by
        $$p_g(X|Y):=\mathbb{E}_{y\sim Y}p_g(X|Y=y)=\sum_{y}\max_x \Pr[X=x,Y=y].$$
        This is to say, upon possessing classical side information $y$ of the random source $X$, the optimal strategy to guess the value of $X$ is simply guessing the most probable one conditioned on $Y=y$. The overall winning probability would then be $p_g(X|Y)$.
    \paragraph{Quantum Random Source.} Now suppose that the side information can be quantum, i.e.\ the random source $X$ together with the quantum side information $\rho_E$ form a classical-quantum (cq) state
        $$\rho_{XE}=\sum_{x}|x\rangle\langle x|\otimes \rho_x,$$
        where the $X$ part is still classical while the $E$ part can be quantum. Note that $\rho_x$'s are subnormalized and they sum up to a normalized state $\rho_E$. Then the guessing probability of $X$ conditioned on $E$ is defined as the guessing probability under an optimal POVM:
        $$p_g(X|E)_\rho := \max_{\mathcal{M}_{E\rightarrow Y}}p_g(X|Y)_{(\mathcal{I}\otimes\mathcal{M)(\rho)}}.$$
        It can be shown using SDP duality that 
        $$p_g(X|E)_\rho = \min_{\sigma: \forall x, \sigma\succcurlyeq \rho_x}\Tr[\sigma].$$

        When the number of qubits in the quantum side information is a small number $k$, we have the following bound for the guessing probability:
        $$p_g(X|E)\leq p_g(X)\cdot 2^{k}.$$
\subsection{Operator Theory}
For an operator $K\in L(\mathcal{A})$, define the $p$-Schatten norm to be
$$\Vert K\Vert_p = \Tr\left[(K^\dag K)^{p/2}\right]^{1/p},$$
for all $1\leq p<\infty$. Some concrete examples are as follows:
\begin{itemize}
    \item $p=1$, then $$\Vert K\Vert_{tr} = \Vert K\Vert_1=\Tr[\sqrt{K^\dag K}]$$ is called the \emph{trace norm} of $K$. For two quantum states $\rho,\sigma\in \mathcal{S}_{=}(\mathcal{A})$, the \emph{trace distance} $\Vert \rho-\sigma\Vert_{tr}$ is the maximum $l_1$-distance of distributions obtained via any quantum measurement.
    \item $p=\infty$, then the $\infty$-Schatten norm is defined to be
        $$\Vert K\Vert_{\infty} = \lim_{p\rightarrow }\Vert K\Vert_p.$$
        This is also called the \emph{operator norm}
        $$\Vert K\Vert_{op}:=\sup_{v\neq 0}\Vert Kv\Vert/\Vert v\Vert.$$
        In the case that $K$ is positive semidefinite, $\Vert K\Vert_{\infty}$ is the largest eigenvalue of $K$.
    \end{itemize}

\section{Characterizing the Smooth Conversion Parameter}
\label{sec:cgp}
Given a cq state $\rho_{XE}$, we are now interested in determining the least amount of classical correlation,  measured by the \emph{conversion parameter}, that one needs in order to generate $\rho$ by applying a quantum channel on the side information.
\begin{definition}[Conversion Parameter]
    \label{defn:cgp}
    Let $\rho_{XE}\in \St_{\leq}(\X\otimes \E)$ be a cq state. The \emph{conversion parameter} of $X$ conditioned on $E$ is defined as
    $$p_{\downarrow}(X|E)_{\rho}:=\min_{\substack{\eta_{XY},\C\\ \I\otimes \C(\eta)=\rho}}p_{g}(X|Y)_\eta.$$
\end{definition}

By monotonicity of guessing probability under quantum channels acting on the side information, we have $p_{\downarrow}(X|E)_{\rho}\geq p_{g}(X|E)_{\rho}$. 

The definitions above would be less interesting if we do not allow the existence of an extra error term $\epsilon$ for the following reason. For a state $\rho$ with the form
$$\rho_{XE}=\sum_{x}q_x|x\rangle\langle x|\otimes |\psi_x\rangle\langle \psi_x|,$$
where $|\psi_x\rangle$ for each $x$ are distinct, the only classical side information to generate $\rho$ losslessly would be the classical message itself; but if we allow that the generated state be $\epsilon$-close to the desired state $\rho$, we may be able to approximate $\rho$ using classical information with significantly lower guessing probability. 

\begin{definition}[Smooth Conversion Parameter]
    \label{defn:scgp}
    Let $\epsilon\geq 0$ and $\rho_{XE}\in \St_{=}(\X\otimes \E)$. Then the \emph{$\epsilon$-smooth conversion parameter} of $X$ conditioned on $E$ is defined as
    $$p^{\epsilon}_{\downarrow}(X|E)_{\rho}:=\min_{\substack{\eta_{XY},\C\\ \Vert \I\otimes \C(\eta)-\rho\Vert_{tr}\leq \epsilon}}p_{g}(X|Y)_\eta.$$
\end{definition}

One may wonder if we can derive a good estimate of $p^{\epsilon}_{\downarrow}(X|E)_\rho$ for given parameter $\epsilon$. Recall that for guessing probability $p_g(X|E)_{\rho}$, we have
$$p_{g}(X|E)_{\rho}=\min_{\sigma:\forall x,\sigma\succcurlyeq \rho_x}\Tr[\sigma]$$
by SDP duality of quantum guessing probability. Similarly, we can prove the following lemma for smooth conversion parameter:
\begin{lemma}
    \label{lem:cgp}
    Let $\epsilon\geq0$ and $\rho_{XE}=\sum_x|x\rangle\langle x|\otimes \rho_x\in \St_{=}(\X\otimes \E)$, then
    $$p^{\epsilon}_{\downarrow}(X|E)_{\rho}=\min_{\substack{p_S,\hro_S\\ \sum_x\Vert\sum_{x\in S}p_S\hro_S-\rho_x\Vert_{tr}\leq \epsilon}}\sum_Sp_S,$$
    where $S$ ranges over all nonempty subsets of $\X$.
\end{lemma}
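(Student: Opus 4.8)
The plan is to prove the claimed identity by showing ``$\leq$'' and ``$\geq$'' separately, in each direction by exhibiting an explicit, objective-preserving translation between feasible configurations of the two minimizations. Two elementary reductions are used throughout. First, since the $Y$-register of a cc-state $\eta_{XY}$ is diagonal in $\{|y\rangle\}$, only the values $\omega_y:=\C(|y\rangle\langle y|)$ of the channel influence $\I\otimes\C(\eta)$, so we may assume $\C$ is classical-to-quantum; then $\I\otimes\C(\eta)=\sum_x|x\rangle\langle x|\otimes\tilde\rho_x$ with $\tilde\rho_x=\sum_y q_{x,y}\omega_y$ and $q_{x,y}=\Pr[X{=}x,Y{=}y]$. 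Second, the trace norm of a block-diagonal operator splits, $\|\sum_x|x\rangle\langle x|\otimes A_x\|_{tr}=\sum_x\|A_x\|_{tr}$, so that the constraint $\|\I\otimes\C(\eta)-\rho\|_{tr}\leq\epsilon$ is literally the per-message sum $\sum_x\|\tilde\rho_x-\rho_x\|_{tr}\leq\epsilon$ occurring on the right. (Everything effectively ranges over the finite set of nonempty subsets of $\X$, so the relevant minima are attained.)

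For the direction showing the right-hand minimum is at most $p^{\epsilon}_{\downarrow}(X|E)_\rho$, I would take an optimal $(\eta,\C)$ and layer-cake each column $(q_{x,y})_x$: writing its distinct positive values as $v^{(y)}_1<\cdots<v^{(y)}_{k_y}$, setting $v^{(y)}_0:=0$ and $S_{y,j}:=\{x:q_{x,y}\geq v^{(y)}_j\}$ (nonempty), one has $q_{x,y}=\sum_{j=1}^{k_y}(v^{(y)}_j-v^{(y)}_{j-1})\,\mathbf{1}[x\in S_{y,j}]$. Collecting layers by their underlying subset $S$ produces $p_S:=\sum_{(y,j):\,S_{y,j}=S}(v^{(y)}_j-v^{(y)}_{j-1})$ and $\hro_S:=p_S^{-1}\sum_{(y,j):\,S_{y,j}=S}(v^{(y)}_j-v^{(y)}_{j-1})\,\omega_y$, a convex mixture of density operators. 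Then $\sum_{S\ni x}p_S\hro_S=\sum_y q_{x,y}\omega_y=\tilde\rho_x$, so the right-hand constraint holds with exactly the same error, while $\sum_S p_S=\sum_y\sum_j(v^{(y)}_j-v^{(y)}_{j-1})=\sum_y\max_x q_{x,y}=p_g(X|Y)_\eta$. Hence the right-hand minimum is $\leq p^{\epsilon}_{\downarrow}(X|E)_\rho$.

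For the reverse inequality, I would take an optimal tuple $(p_S,\hro_S)$, let $Y$ range over $\{S:p_S>0\}$, set $\Pr[X{=}x,Y{=}S]:=p_S\,\mathbf{1}[x\in S]$ (so that, conditioned on $Y{=}S$, the message $X$ is uniform on $S$ --- this is the ``uniform on a subset'' normal form), and let $\C$ map $|S\rangle\langle S|\mapsto\hro_S$. Then $\I\otimes\C(\eta)=\sum_x|x\rangle\langle x|\otimes\sum_{S\ni x}p_S\hro_S$ is within trace distance $\epsilon$ of $\rho$ by the two reductions, and since each $S$ is nonempty, $p_g(X|Y)_\eta=\sum_S\max_x\bigl(p_S\,\mathbf{1}[x\in S]\bigr)=\sum_S p_S$. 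So $p^{\epsilon}_{\downarrow}(X|E)_\rho\leq\sum_S p_S$, the right-hand minimum.

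The step I expect to be the main obstacle is normalization in this second direction: the $\eta$ just built has $\Tr\eta=\sum_S|S|p_S$, which the error constraint only confines to $[1-\epsilon,1+\epsilon]$, whereas the conversion parameter quantifies over (sub)normalized side-information states. A trace below $1$ is harmless (cf.\ Definition~\ref{defn:cgp} and the use of $\St_{\leq}$), so what remains is to argue that the right-hand optimum is attained with $\sum_S|S|p_S\leq1$: intuitively, total trace exceeding $1$ merely spends error budget on a trace mismatch, so pushing the excess mass back --- carefully, so as to preserve the subset structure and not increase $\sum_S p_S$ --- never hurts. I would make this precise either by a perturbation/exchange argument at an optimal configuration or by passing to the SDP dual of the right-hand program (whose variables are operators $W_x$ with $\|W_x\|_{op}$ bounded by a scalar $\mu\geq0$ and $\sum_{x\in S}W_x\preccurlyeq I$ for every nonempty $S$), and I expect this bookkeeping to be the fiddliest part; the layer-cake half, by contrast, is essentially mechanical once the decomposition is written down.
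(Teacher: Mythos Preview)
Your layer-cake argument for the inequality ``right-hand side $\le p^\epsilon_\downarrow(X|E)_\rho$'' is correct and is a cleaner realization of the same reduction the paper uses. The paper factors this direction through a separately stated structural lemma (Lemma~\ref{lem:ccsimp}): any cc-state $\eta_{XY}$ admits a ``subset normal form'' $\eta'_{XS}=\sum_S p_S\bigl(\sum_{x\in S}|x\rangle\langle x|\bigr)\otimes|S\rangle\langle S|$ with the same guessing probability and a classical channel $C_{S\to Y}$ recovering $\eta$. It proves this via the recursion $p_S(y):=\min_{x\in S}p(x,y)-\sum_{S'\supsetneq S}p_{S'}(y)$ together with an induction on $|S|$ showing $p_S(y)\ge0$ and that, for each fixed $y$, the sets with $p_S(y)>0$ form a chain. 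Unwinding that recursion along the chain yields exactly your level-set weights $v^{(y)}_j-v^{(y)}_{j-1}$, so the two constructions coincide; your version just writes the layer-cake down directly, bypassing both the induction and the intermediate channel $C$ (which the paper only uses to compose with $\C$ anyway).

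On the reverse inequality and the normalization obstacle you isolate: the paper does not address it either. What the paper's argument actually establishes is the equality with the extra constraint $\sum_S|S|p_S=1$, since Lemma~\ref{lem:ccsimp} produces a \emph{normalized} $\eta'$ from a normalized $\eta$, and conversely any normalized subset-form $\eta'$ is already a legitimate cc-state; it is this constrained version that is invoked downstream in the proof of the Separation Lemma, where the identity $\sum_{S\ni x}p_S=q_x$ is used explicitly. Your layer-cake also preserves total mass ($\sum_S|S|p_S=\sum_{x,y}q_{x,y}=\Tr\eta$), so you recover the same constrained equality. Whether the unconstrained right-hand minimum as literally stated can be strictly smaller is not settled by either argument; your proposed perturbation or dual fix may well close it, but nothing in the paper depends on it.
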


Before proceeding to the proof of the lemma, we want to first narrow down the set of classical side information we need to consider.

\begin{lemma}
    \label{lem:ccsimp}
    Let $\eta_{XY}=\sum_{xy}p_{xy}|x\rangle\langle x|\otimes |y\rangle\langle y|$ be a cc state. Then there exists a cc state $\eta'_{XS}\in \St_=(\X\otimes 2^{\X})$ with the following three properties:
    \begin{enumerate}
        \item $p_g(X|Y)_\eta = p_g(X|S)_{\eta'}$;
        \item There exists a classical channel $C_{S\rightarrow Y}$ such that $(I\otimes C)(\eta')=\eta$;
        \item $\eta'$ is of the form
            $$\eta'=\sum_{S\subseteq \X}p_{S}(\sum_{x\in S}|x\rangle\langle x|)\otimes |S\rangle\langle S|.$$
    \end{enumerate}
\end{lemma}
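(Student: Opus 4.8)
\textbf{Proof proposal for Lemma~\ref{lem:ccsimp}.}

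The plan is to build $\eta'_{XS}$ from $\eta_{XY}$ by a two-stage ``flattening'' procedure carried out independently for each side-information symbol $y$. Fix $y$ with $p_y := \sum_x p_{xy} > 0$, and let $q_{x|y} := p_{xy}/p_y$ be the conditional distribution of $X$ given $Y=y$. The key idea is that, from the guessing-probability point of view, only two features of $q_{\cdot|y}$ matter: the value $m_y := \max_x q_{x|y}$ and the total mass $p_y$, since $p_g(X|Y)_\eta = \sum_y p_y m_y$. So I want to replace the single symbol $y$ by a convex combination of ``flat'' symbols — symbols $S$ for which the conditional distribution of $X$ is exactly uniform on $S$ — in such a way that (i) the overall joint distribution on $X$ is unchanged (this is what gives property~2, via the channel $C_{S\to Y}$ that maps every $S$ coming from $y$ back to $y$), and (ii) the contribution to the guessing probability is unchanged, i.e.\ $\sum_{S}(\text{weight of }S)\cdot\tfrac1{|S|} = p_y m_y$ after we also account for the fact that $X$ is globally uniform.

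Here is the decomposition I have in mind for a fixed $y$. Sort the conditional probabilities $q_{x|y}$ in decreasing order as $a_1 \ge a_2 \ge \cdots \ge a_N \ge 0$ (with $N = |\X|$, padding by zeros), supported on the prefix sets $S_j := \{x : x \text{ is among the top } j\}$. Writing $a_{N+1} := 0$, we have the telescoping identity $q_{\cdot|y} = \sum_{j=1}^{N} j\,(a_j - a_{j+1})\, U_{S_j}$, where $U_{S_j}$ is the uniform distribution on $S_j$; the coefficients $\lambda_j := j(a_j - a_{j+1}) \ge 0$ sum to $\sum_j a_j = 1$. Now in $\eta'$ I let $y$ contribute, for each $j$, the symbol $S_j$ with weight $p_y \lambda_j$, i.e.\
\[
\eta' \;:=\; \sum_{y}\sum_{j=1}^{N} p_y\,\lambda_j^{(y)}\Bigl(\sum_{x\in S_j^{(y)}}\tfrac1{|S_j^{(y)}|}\,q_{\cdot|y}\text{-free}\Bigr)\ldots
\]
— more precisely, since the $X$-marginal must stay uniform (each $x$ has total mass $2^{-n}$), I define $p_S := \sum_{y}\sum_{j : S_j^{(y)} = S} p_y \lambda_j^{(y)} \cdot \tfrac{|S|}{?}$; I will fix the normalization so that $\eta'$ has the exact form in property~3 and marginalizes correctly. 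The cleanest way to state property~3 is $\eta' = \sum_S p_S\bigl(\sum_{x\in S}|x\rangle\!\langle x|\bigr)\otimes|S\rangle\!\langle S|$ with $\sum_S p_S |S| = |\X|$ forced by $\Tr\eta' = 1$, so I should choose $p_S$ proportional to $\sum_y p_y\lambda_j^{(y)}/|S|$ summed over matching $(y,j)$.

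The three properties then follow by short computations. For property~2, the channel $C_{S\to Y}$ sends $S$ to $y$ with probability proportional to the part of $p_S$ that came from $y$; applying it undoes the decomposition and returns $\sum_{xy} p_{xy}|x\rangle\!\langle x|\otimes|y\rangle\!\langle y| = \eta$, because $\sum_j \lambda_j^{(y)} U_{S_j^{(y)}} = q_{\cdot|y}$. For property~1, note $p_g(X|S)_{\eta'} = \sum_S p_S \cdot \tfrac1{|S|}\cdot|S|\cdot(\text{something})$ — concretely, conditioned on $S$ the distribution of $X$ is uniform on $S$ so $p_g(X|S=S) = 1/|S|$, hence $p_g(X|S)_{\eta'} = \sum_S \Pr[S]/|S|$; I then check this equals $\sum_y p_y m_y = p_g(X|Y)_\eta$ by plugging in the definition of $p_S$ and using that the top set $S_1^{(y)} = \{x^*\}$ is the only one whose contribution survives after the arithmetic — alternatively, and more robustly, property~1 follows from property~2 (monotonicity gives $p_g(X|S)_{\eta'} \ge p_g(X|Y)_\eta$) together with the reverse inequality obtained by the guessing strategy on $\eta'$ that, given $S$, outputs the fixed element $x^*(S) := \arg\max$; this achieves exactly $\sum_S p_S$, and I will have arranged $\sum_S p_S = \sum_y p_y m_y$. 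I expect the main obstacle to be purely bookkeeping: getting the normalization of $p_S$ right so that simultaneously (a) the $X$-marginal of $\eta'$ is uniform, (b) $\eta'$ has the exact tensor form in property~3 with $\sum_x|x\rangle\!\langle x|$ restricted to $S$ (not a reweighted version), and (c) the guessing probability is preserved on the nose; the conceptual content — the prefix-set / telescoping decomposition of each conditional distribution into flat pieces — is the real point and is straightforward.
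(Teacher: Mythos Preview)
Your approach is correct and is, at its core, the same layer-cake decomposition the paper uses: for each $y$, write $p(\cdot,y)$ as a nonnegative combination of indicator functions of nested ``prefix'' sets by telescoping the sorted values. The paper arrives at exactly this decomposition but dresses it differently: it defines $p_S(y) := \min_{x\in S} p(x,y) - \sum_{S'\supsetneq S} p_{S'}(y)$ recursively over all subsets, and then spends an induction showing that $p_S(y)\ge 0$ and that, for each fixed $y$, the sets with $p_S(y)>0$ form a chain --- which is precisely your observation that only the sorted prefix sets $S_j^{(y)}$ carry weight, with $p_{S_j}(y) = p_y(a_j - a_{j+1})$. Your direct ``sort and telescope'' presentation is more transparent and avoids the induction entirely; the paper's formulation has the minor advantage that the channel $C(y|S) = p_S(y)/\hat p_S$ and the verification of property~2 fall out uniformly without having to track which $(y,j)$ contributed to a given $S$.

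Two small slips to clean up in your write-up: the lemma does \emph{not} assume the $X$-marginal is uniform, so drop the ``each $x$ has total mass $2^{-n}$'' remark; and the normalization is $\sum_S p_S\,|S| = 1$ (not $|\X|$), since $\sum_S p_S|S| = \sum_x\sum_{S\ni x} p_S = \sum_x \sum_y p(x,y)$. Your computation $\sum_S p_S = \sum_y p_y\, a_1^{(y)} = \sum_y p_y m_y = p_g(X|Y)_\eta$ is exactly right and gives property~1 directly, matching the paper's argument.
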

This lemma implies that, for our purposes, the set of all joint distributions we need to consider is of the form where the side information only indicates in which set the classical information is uniformly distributed.
\begin{proof}
    We provide a constructive proof. For sake of simplicity we will use classical probability notations, i.e.\ $\eta_{XY}$ will be identified as the classical joint distribution $p_{XY}$.

    All subset of $\X$ form a directed acyclic graph under containment relations, therefore we can recursively define 
    $$p_S(y)=\min_{x\in S}p(x,y)-\sum_{S\subsetneq S'}p_{S'}(y)$$
    for all nonempty $S\subseteq\X$. Let $\hat{p}_S=\sum_yp_S(y)$ and $C_S=\Supp(p_S(y))$. We claim that the distribution 
    $$p'(x,S)=\hat{p}_S\delta_{x\in S},$$
    together with the channel
    $$C(y|S)=p_S(y)/\hat{p}_S$$
    satisfies our requirements for $\eta'$. 

    First we need to show that $p'_{XS}$ is indeed a distribution, and $C$ is indeed a classical channel. It suffices to show that $p_S(y)\geq0$ for all $S$ and $y$. 
    
    To prove this, we apply induction on size of the set to prove two things: 
    \begin{itemize}
    \item $p_S(y)\geq 0$ for all $S$ and $y$;
    \item For all $S_1$ and $S_2$ where no one contains the other, we have $C_{S_1}\cap C_{S_2}=\emptyset$.
    \end{itemize}
    The base case is when $|S|,|S_1|,|S_2|\geq|\X|$. Apparently we have $S=S_1=S_2=\X$, so the second statement is true. For the first one, we have by construction
    $$p_{\X}(y)=\min_xp(x,y)\geq0.$$
    Let's move on to the induction step. Assume now that for all $S_1,S_2$ where\begin{itemize}\item $|S_1|,|S_2|> k$\item $S_1,S_2\subsetneq S_1\cup S_2$\end{itemize}
    we have $p_S(y)\geq 0$ for all $y$ and $C_{S_1}\cap C_{S_2}=\emptyset$. Then for a given $y$ and a subset $S$ with $|S|=k$, the collection of subsets
    $$\{S'|y\in \Supp(S'),S\subsetneq S'\}$$
    must form a chain $S'_1\subsetneq S'_2\subsetneq \cdots\subsetneq S'_m$ under containment relations. Then
    \begin{align*}
        p_S(y)&=\min_{x\in S}p(x,y)-\sum_{S\subsetneq S'}p_{S'}(y)\\
        &=\min_{x\in S}p(x,y)-\sum_{i=1}^m p_{S'_i}(y)\\
        &=\min_{x\in S}p(x,y)-\left(\min_{x\in S'_1}p(x,y)-\sum_{i=2}^m p_{S'_i}(y)\right)-\sum_{i=2}^m p_{S'_i}(y)\\
        &=\min_{x\in S}p(x,y)-\min_{x\in S'_1}p(x,y)\\
        &\geq0.
    \end{align*}
    Also, for $S_1,S_2$ not contained in each other and with size $\geq k$, we have
    \begin{align*}
        p_{S_1}(y)&=\min_{x\in S_1}p(x,y)-\sum_{S_1\subsetneq S'}p_{S'}(x,y)\\
        &=\left(\min_{x\in S_1}p(x,y)-\min_{x\in S_1\cup S_2}p(x,y)\right)-\sum_{S_1\subseteq S', S_2\not\subseteq S'}p_{S'}(y)\\
        &\leq \min_{x\in S_1}p(x,y)-\min_{x\in S_1\cup S_2}p(x,y)\end{align*}
     and similarly 
     $$p_{S_2}(y)\leq\min_{x\in S_2}p(x,y)-\min_{x\in S_1\cup S_2}p(x,y)$$
     using the induction hypothesis $p_S(y)\geq0$ for all $|S|\geq k$.
    As $$\min_{x\in S_1\cup S_2}p(x,y)=\min\{\min_{x\in S_1}p(x,y),\min_{x\in S_2}p(x,y)\},$$ we know that either $p_{S_1}(y)= 0$ or $p_{S_2}(y)= 0$, which results in that $y\notin C_{S_1}\cap C_{S_2}$. As this holds for an arbitrary $y$, we must have $C_{S_1}\cap C_{S_2}=\emptyset$. This completes the induction step.
    
	We then proceed to prove that $p'$ and $C$ satisfies the three properties listed in Lemma~\ref{lem:ccsimp}. Note that property 3 is automatically satisfied by the form of $p'$, so it suffices to show just the first two properties.
    
    \begin{enumerate}
    \item To see property $1$ of Lemma~\ref{lem:ccsimp}, note that the guessing probability of $p$ and $p'$ are respectively
    $p_g(X|Y)_{p}=\sum_y\max_xp(x,y)$
    and $p_g(X|S)_{p'}=\sum_S\hat{p}_S=\sum_y\sum_Sp_S(y)$. It then suffices to prove that $\sum_S p_S(y)=\max_x p(x,y)$ holds for all $y$. 
    
    One direction is easier to prove: for all $x$ we have
    \begin{align*}
        \sum_{S}p_{S}(y)&\geq \sum_{\{x\}\subseteq S}p_{S}(y)\\
        &=p(x,y)-\sum_{\{x\}\subsetneq S}p_S(y)+\sum_{\{x\}\subsetneq S}p_S(y)\\
        &= p(x,y),
    \end{align*}
    thus $\sum_{S}p_S(y)\geq \max_x p(x,y)$. To prove the other direction, let's look more in detail into what we have proved so far. For every subset $S$, we have shown that the sum $\sum_{S\subsetneq S'}p_{S'}(y)$ can be reduced to summation on a chain. Without loss of generality, this chain can be completed to maximal length such that the difference of cardinalities of adjacent terms on this chain is $1$, as additional terms would not change the final result. As
    $$\sum_{S}p_S(y)=\sum_{\emptyset\subsetneq S}p_S(y),$$ there exists a chain $\emptyset\subsetneq S_1\subsetneq\cdots\subsetneq S_{|\X|}=\X$ such that $|S_i|=i$ and
    $$\sum_{S}p_S(y)=\sum_{i=1}^{|\X|}p_{S_i}(y).$$ 
    As $|S_1|=1$, there must exists $x^*$ such that $S_1=\{x^*\}$. Then $$\sum_Sp_S(y)=\sum_{i=1}^{|\X|}p_{S_i}(y)=p(x^*,y)\leq \max_x p(x,y),$$ which proves the opposite direction of property 1.
	\item  For property 2, denote $p''_{XY}$ the joint distribution we get from applying $C$ onto $p'$. We have
    \begin{align*}
        p''(x,y)&=\sum_{S}p'(x,S)C(y|S)\\
        &=\sum_S \delta_{x\in S} \hat{p}_S\cdot p_S(y)/\hat{p}_S\\
        &=\sum_{x\in S}p_S(y)\\
        &=p_{\{x\}}(y)+\sum_{\{x\}\subsetneq S}p_S(y)\\
        &=p(x,y)-\sum_{\{x\}\subsetneq S}p_S(y)+\sum_{\{x\}\subsetneq S}p_S(y)\\
        &= p(x,y).
    \end{align*}
    This proves property 2, which finishes the entire proof of Lemma~\ref{lem:ccsimp}.
    \end{enumerate}
\end{proof}
With Lemma~\ref{lem:ccsimp}, we are now ready to prove Lemma~\ref{lem:cgp}.
\begin{proof}[Proof of Lemma~\ref{lem:cgp}]
	Recall the definition of $\epsilon$-smooth conversion parameter:
    $$p^{\epsilon}_\downarrow(X|E)_\rho=\min_{\substack{\eta_{XY},\C_{\Y\rightarrow \E}\\\Vert(\I\otimes \C)(\eta)-\rho\Vert\leq \epsilon}}p_g(X|Y)_\eta.$$
    For every $\eta$, by Lemma~\ref{lem:ccsimp} there exists $\eta'_{XS}=\sum_Sp_S(\sum_{x\in S}|x\rangle\langle x|)\otimes |S\rangle\langle S|$ satisfying the three properties. It is easy to see that $\eta'$, with the channel $\C'=\C\circ C$, is also a cc state achieving the same guessing probability and error, and the guessing probability of $\eta'$ is $\sum_Sp_S$. A general quantum channel acting on the side information can be characterized as follows without loss of generality:
    $$\C'_{2^{\X}\rightarrow \E}(\cdot)=\sum_{S}\langle S|\cdot|S\rangle\hro_S.$$
    Then we have
    $$(\I\otimes \C')(\eta')=\sum_S(\sum_{x\in S}|x\rangle\langle x|)\otimes p_S\hro_S,$$
    which leads to
    $$\Vert (\I\otimes \C')(\eta')-\rho\Vert=\sum_x\Vert\sum_{x\in S}p_S\hro_S-\rho_x\Vert,$$
    finishing the proof of Lemma~\ref{lem:cgp}.
\end{proof}

\section{Proof of Separation Lemma}
\label{sec:prf}
In this section we provide a proof for our main technical result, the Separation Lemma~\ref{lem:sepr}.

\begin{lemma}[Separation Lemma]
\label{lem:sepr}
For all $\delta>0, \epsilon\geq 0$ and state
$$\rho_{XE}=\sum q_x|x\rangle\langle x|\otimes |\psi_x\rangle\langle \psi_x|$$
with \emph{maximum overlap} 
$$\delta=\max_{x,y:x\neq y}|\langle\psi_x| \psi_y\rangle|,$$

we have
    $$1-\epsilon/2\leq (1-\delta)p_{\downarrow}^{\epsilon}(X|E)_\rho+\delta.$$
\end{lemma}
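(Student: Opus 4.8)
The plan is to reduce to the explicit formula of Lemma~\ref{lem:cgp} and then test the resulting approximation against the verification measurement itself. By Lemma~\ref{lem:cgp}, $p_{\downarrow}^{\epsilon}(X|E)_{\rho}$ is the minimum of $\sum_S p_S$ over all families of weights $p_S\ge 0$ and density operators $\hat{\rho}_S$ (indexed by nonempty $S\subseteq\mathcal{X}$) for which the cq-state $\tilde{\rho}:=\sum_x|x\rangle\langle x|\otimes\sigma_x$, with $\sigma_x:=\sum_{S\ni x}p_S\hat{\rho}_S$, satisfies $\|\tilde{\rho}-\rho\|_{tr}\le\epsilon$; here I write $\rho_x=q_x|\psi_x\rangle\langle\psi_x|$ for the blocks of $\rho$. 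So it suffices to prove $(1-\delta)\sum_S p_S+\delta\ge 1-\epsilon/2$ for every such admissible family. Two features of this minimization matter: (a) $\tilde{\rho}$ is the image of the normalized cc-state $\eta'$ produced by Lemma~\ref{lem:ccsimp} under a quantum channel, hence is itself normalized, which forces $\sum_S|S|\,p_S=\sum_x\Tr\sigma_x=\Tr\tilde{\rho}=1$; and (b) $\tilde{\rho}$ lies within trace distance $\epsilon$ of $\rho$.

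Now fix an admissible family and consider the verification POVM element $V:=\sum_x|x\rangle\langle x|\otimes|\psi_x\rangle\langle\psi_x|$. Since $\langle\psi_x|\psi_x\rangle=1$ and $\rho$ is normalized, $\Tr(V\rho)=\sum_x q_x=1$. Because $\rho$ and $\tilde{\rho}$ are both normalized, applying the two-outcome measurement $\{V,I-V\}$ in the paper's characterization of the trace distance gives $2\,|\Tr(V\rho)-\Tr(V\tilde{\rho})|\le\|\rho-\tilde{\rho}\|_{tr}\le\epsilon$, hence $\Tr(V\tilde{\rho})\ge 1-\epsilon/2$. On the other hand, expanding block-diagonally, $\Tr(V\tilde{\rho})=\sum_x\langle\psi_x|\sigma_x|\psi_x\rangle=\sum_S p_S\,\Tr(\hat{\rho}_S M_S)$, where $M_S:=\sum_{x\in S}|\psi_x\rangle\langle\psi_x|$.

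The heart of the matter is to bound $\Tr(\hat{\rho}_S M_S)\le\|M_S\|_{\infty}$ using the near-orthogonality of the fingerprints. Writing $M_S=\Psi_S\Psi_S^{\dagger}$ with $\Psi_S$ the synthesis map $e_x\mapsto|\psi_x\rangle$, its nonzero spectrum coincides with that of the Gram matrix $(\Psi_S^{\dagger}\Psi_S)_{xy}=\langle\psi_x|\psi_y\rangle$, which has unit diagonal and off-diagonal entries of modulus at most $\delta$; a Gershgorin/Schur-test estimate — this is the step the proof sketch carries out via the Cotlar--Stein lemma, applied to the rank-one projections $|\psi_x\rangle\langle\psi_x|$ — gives $\|M_S\|_{\infty}\le 1+(|S|-1)\delta$. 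Combining, $1-\epsilon/2\le\sum_S p_S\big(1+(|S|-1)\delta\big)=(1-\delta)\sum_S p_S+\delta\sum_S|S|\,p_S=(1-\delta)\sum_S p_S+\delta$, where the last equality is feature (a). Taking the minimum over admissible families and invoking Lemma~\ref{lem:cgp} gives $1-\epsilon/2\le(1-\delta)p_{\downarrow}^{\epsilon}(X|E)_{\rho}+\delta$, which is the assertion of Lemma~\ref{lem:sepr}.

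I expect the real work to sit in the operator-norm estimate for $M_S$: one must bound $\|\sum_{x\in S}|\psi_x\rangle\langle\psi_x|\|_{\infty}$ uniformly over subsets $S\subseteq\mathcal{X}$ that can be exponentially large, and it is essential that the bound grow only linearly in $\delta$, since the error term $\delta\sum_S|S|\,p_S$ must collapse exactly to $\delta$ (a cruder almost-orthogonality bound scaling like $\sqrt{\delta}$ would yield a strictly weaker inequality). A secondary but real point is the normalization bookkeeping in feature (a): if one only knew $\sum_S|S|\,p_S\in[1-\epsilon,1+\epsilon]$ instead of $=1$, the identical computation would degrade to the weaker bound $1-\epsilon\le(1-\delta)p_{\downarrow}^{\epsilon}(X|E)_{\rho}+\delta$, so it genuinely matters that the optimal conversion arises from a normalized classical side-information state.
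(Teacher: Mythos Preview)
Your proposal is correct and follows essentially the same route as the paper: reduce via Lemma~\ref{lem:cgp} to the subset form, use normalization to get $\sum_S|S|p_S=1$, bound $\Tr(V\tilde\rho)$ from below by $1-\epsilon/2$ via trace distance, and from above by $\sum_S p_S\|M_S\|_\infty$ with $\|M_S\|_\infty\le 1+(|S|-1)\delta$. If anything, your write-up is more explicit than the paper's, which relegates the $e_s\ge 1-\epsilon/2$ step to the discussion preceding the formal proof; your observation that the Gram-matrix/Gershgorin estimate already gives the needed $1+(|S|-1)\delta$ bound is valid and is in fact a shorter route than the Cotlar--Stein argument the paper invokes, though the two yield identical constants here.
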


Regarding the quantum side information as a cryptographic hash function, now let us consider the following scheme. An adversary, upon receiving some classical leakage of a classical message, tries to forge a quantum system which looks like the authentic hash. The adversary then sends this state to a verifier with full knowledge of the classical message, and the verifier, conditioned on classical information $x$, performs a projection measurement $V$ on the quantum state provided by the
adversary. We say that the adversary cheats the verifier if the measurement outcome is accept. Given that the forged state is $\epsilon$-close to the quantum hash, the passing probability $e_s$ is clearly no less than $1-\epsilon/2$, and the probability $p_g$ that the adversary can guess the classical message $X$ correctly is no more than $p^{\epsilon}_{\downarrow }(X|E)_{\rho}$. It then suffices to prove that in this particular scheme, we have
$$e_s\leq (1-\delta)p_g+\delta.$$

In the proof of Lemma~\ref{lem:sepr} we will make use of Cotlar-Stein Lemma, which gives a good estimate of the operator norm of the sum of near-orthogonal operators. We will attach the proof of the Cotlar-Stein Lemma in Appendix~\ref{sec:csl} for completeness.

\begin{lemma}[Cotlar-Stein Lemma]
\label{lem:csl}
For a set of unit vectors $\{|\psi_1\rangle,|\psi_2\rangle,\cdots,|\psi_n\rangle\}$ with maximum fidelity $\max_{i,j:i\neq j}|\langle \psi_i|\psi_j\rangle|\leq \delta$, we have
$$\lambda_{\max}\left(\sum_{i=1}^n|\psi_i\rangle\langle \psi_i|\right)\leq 1+(n-1)\delta.$$
\end{lemma}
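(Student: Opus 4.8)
The plan is to establish the operator-norm bound by a direct Schur-test / Gershgorin-type argument on the Gram matrix, since here the operators being summed are rank-one projectors $T_i = |\psi_i\rangle\langle\psi_i|$ onto unit vectors that are pairwise $\delta$-near-orthogonal. First I would reduce the question to the Gram matrix: let $G$ be the $n\times n$ matrix with entries $G_{ij}=\langle\psi_i|\psi_j\rangle$, so $G_{ii}=1$ and $|G_{ij}|\le\delta$ for $i\ne j$. The key observation is that $\sum_i T_i$ and $G$ have the same nonzero spectrum; concretely, writing $A$ for the operator whose columns are the $|\psi_i\rangle$ (a map from $\mathbb{C}^n$ to the Hilbert space), one has $\sum_i T_i = AA^\dagger$ and $G = A^\dagger A$, hence $\lambda_{\max}(\sum_i T_i)=\lambda_{\max}(AA^\dagger)=\lambda_{\max}(A^\dagger A)=\lambda_{\max}(G)=\|G\|_{op}$.

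Next I would bound $\|G\|_{op}$. Since $G$ is Hermitian, $\|G\|_{op}=\max_{v\ne 0}|\langle v, Gv\rangle|/\langle v,v\rangle$; alternatively, and more cleanly, write $G = I + E$ where $E$ has zero diagonal and off-diagonal entries of modulus at most $\delta$. Then $\|G\|_{op}\le 1 + \|E\|_{op}$. To bound $\|E\|_{op}$ I would apply the standard row-sum (Schur/Gershgorin) bound: for any Hermitian matrix, the operator norm is at most $\max_i \sum_j |E_{ij}|$, which here is at most $(n-1)\delta$ since each row of $E$ has $n-1$ off-diagonal entries each of modulus $\le\delta$. Combining, $\lambda_{\max}(\sum_i T_i)=\|G\|_{op}\le 1+(n-1)\delta$, which is exactly the claim.

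The main obstacle — really the only subtlety worth checking — is justifying the spectral identification $\lambda_{\max}(AA^\dagger)=\lambda_{\max}(A^\dagger A)$ when the ambient Hilbert space may have dimension different from $n$; this is the elementary fact that $AA^\dagger$ and $A^\dagger A$ share all nonzero eigenvalues (with multiplicity), and since both operators are positive semidefinite and $\sum_i T_i\ne 0$ for $n\ge 1$, the largest eigenvalue on each side is positive and they coincide. Everything else is routine: the Gershgorin/Schur row-sum bound is standard, and the decomposition $G=I+E$ is immediate from $\|\psi_i\|=1$. One could alternatively avoid the Gram-matrix reduction entirely and argue directly: for any unit vector $|\phi\rangle$, $\langle\phi|\sum_i T_i|\phi\rangle = \sum_i |\langle\phi|\psi_i\rangle|^2$, and bound this quadratic form using the near-orthogonality of the $|\psi_i\rangle$ via a Cauchy–Schwarz expansion of $\|\sum_i c_i|\psi_i\rangle\|^2$ for the optimal coefficients $c_i=\langle\psi_i|\phi\rangle$ — but this route essentially re-derives the Gram-matrix bound, so I would present the cleaner version above. (The name "Cotlar–Stein" suggests the authors may instead give the almost-orthogonality / $TT^\dagger$-$T^\dagger T$ interpolation argument in full generality in the appendix; for this rank-one instance the Gram-matrix proof is the shortest path.)
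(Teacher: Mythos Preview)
Your proof is correct but takes a genuinely different route from the paper. The paper follows the classical Cotlar--Stein scheme: it uses $\lambda_{\max}(\rho)=\lim_{m\to\infty}\bigl(\Tr[\rho^m]\bigr)^{1/m}$, expands $\Tr\bigl[(\sum_i|\psi_i\rangle\langle\psi_i|)^m\bigr]$ as a sum over tuples $(i_1,\dots,i_m)$ of products $\prod_j\langle\psi_{i_j}|\psi_{i_{j+1}}\rangle$, drops the cyclic factor, and peels off the indices one at a time using $\sum_{i_m}|\langle\psi_{i_{m-1}}|\psi_{i_m}\rangle|\le 1+(n-1)\delta$ to get the bound $n\cdot(1+(n-1)\delta)^{m-1}$, then takes $m\to\infty$. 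Your argument instead identifies the nonzero spectrum of $\sum_i|\psi_i\rangle\langle\psi_i|=AA^\dagger$ with that of the Gram matrix $G=A^\dagger A$, writes $G=I+E$, and bounds $\|E\|_{op}$ by the row-sum (Gershgorin/Schur) estimate $(n-1)\delta$. For this rank-one setting your proof is shorter and entirely finitary; the paper's approach, while heavier here, is the form of Cotlar--Stein that extends verbatim to sums of general (not rank-one) almost-orthogonal operators, which is presumably why the authors present it that way. You correctly anticipated this difference in your final parenthetical remark.
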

\begin{proof}[Proof of Lemma~\ref{lem:sepr}]
By Lemma~\ref{lem:cgp}, without loss of generality, we can assume that the joint state of the classical message and the side information obtained by the adversary is of the form
$$\rho_{XZ}=\sum_{S}p_S(\sum_{x\in S}|x\rangle\langle x|)\otimes |S\rangle\langle S|$$
where $\sum_{x\in S}p_S=q_x$ for all $x$. It follows then that $$p_g=\sum_Sp_S, \sum_{S}p_S|S|=1.$$ The adversary then prepares state $\hro_S$ based on the side information $S$. We have a precise evaluation of $e_s$:
\begin{align*}
e_s&=\sum_x \langle \psi_x|\sum_{x\in S}p_S\hro_S|\psi_x\rangle \\
&=\sum_S p_S\Tr[\hro_S \sum_{x\in S}|\psi_x\rangle\langle \psi_x|]\\
&\leq \sum_S p_S\lambda_{\max} (\sum_{x\in S}|\psi_x\rangle\langle \psi_x|).\\
\end{align*}
By the Cotlar-Stein Lemma, we have
$$\lambda_{\max}(\sum_{x\in S}|\psi_x\rangle\langle \psi_x|)\leq 1+(|S|-1)\delta.$$
Since we have $p_g=\sum_Sp_S$ and $1=\sum_Sp_S|S|$, 
\begin{align*}
e_s&\leq \sum_Sp_S\lambda_{\max}\left(\sum_{x\in S}|\psi_x\rangle\langle\psi_x|\right)\\
&\leq \sum_Sp_S\left(1+(|S|-1)\delta\right)\\
&=p_g(1-\delta)+\delta.\end{align*}
\end{proof}

\section{Quantum Hash}
\subsection{Quantum Fingetprinting is Resilient to Classical Leakage}
Now let us consider the setting of quantum hashing. For the sake of simplicity, we assume that the classical message $X$ is uniformly distributed over $\{0,1\}^n$, and all parties are information theoretic and considered as channels.

Suppose now that a prover $A$, either adversarial or honest, upon obtaining some side information $Y$ of $X$, wants to show that he has full access to the classical message $X$. To show this, he needs to pass a test held by a verifier $V$ who has full access to $X$. One way to do this is to send $V$ an $m$-bit hash $M$, and $V$ will accept or reject based on the classical messages $X$ and $M$.

If the adversary wants to cheat as best as he can, the optimal strategy would be applying a deterministic mapping on the side information $Y$; similarly, if the verifier wants to distinguish adversarial parties from honest ones as best as he can, the optimal strategy would also be a deterministic algorithm. Therefore we can safely assume that both $A$ and $V$ are deterministic mappings.

\begin{definition}[Resilience against leakage]
    A mapping $h:\{0,1\}^n\rightarrow\{0,1\}^m$ is called $\sigma$-resilient against $k$ bits of information leakage if for all classical side information $Y$ of $X$ such that $|Y|\leq k$, forall mappings $f:\{0,1\}^k\rightarrow \{0,1\}^m$, we have
        $$\Pr_{XY}[f(Y)=h(X)]\leq \sigma.$$
        If $\sigma$ is not specified then it is assumed that $\sigma=\negl(n)$.
\end{definition}

Ideally, one would want a hash function which is resilient against much information leakage and short at the same time.
Unfortunately these two requirements cannot be achieved at the same time. If an $m$-bit verification scheme $V$ is resilient against $k$ bits of classical leakage, one must have $m=k+\omega(\log n)$. This can be seen as follows: suppose otherwise that $m=k+O(\log n)$. Then an
adversarial prover, upon getting the first $k$ bits of the message an honest party would send to the verifier, guesses the remaining $O(\log n)$ bits uniformly at random. Such a prover would then have an inverse polynomial probability of passing the test.

Now suppose that both the verifier and the prover have access to quantum power, while the information leakage is still classical. Now the prover can send an $m$-qubit quantum system $\rho$ to the verifier, and the verifier would then perform a joint measurement on both $X$ and $\rho$ to determine whether to accept or not. Denote the state space of $X,Y$ and $M$ respectively by $\mathcal{X},\mathcal{Y}$ and $\mathcal{M}$.

\begin{definition}[Resilience of quantum fingerprinting against classical information leakage]
    An $(n,m,\delta)$ quantum fingerprinting $\phi$ is called $\sigma$-resilient against $k$ bits of classical information leakage if for all classical side information $Y$ of $X$ such that $|Y|\leq k$ and any quantum channel $\mathcal{C}_{\mathcal{Y}\rightarrow \mathcal{M}}$, we have
            $$\Tr[V\cdot (\mathcal{I}\otimes\mathcal{C})(\rho_{XY})]=\sigma,$$
            where $$\rho_{XY}=\sum_{x,y}\Tr[X=x,Y=y]|x\rangle\langle x|\otimes |y\rangle\langle y|,$$
            and $$V=\sum_{x}|x\rangle\langle x|\otimes \phi_x.$$
            When $\sigma$ is not specified, it is assumed that $\sigma=\negl(n)$.
\end{definition}

In the case where all states we are considering are cq states, it is safe to replace a general channel by mapping each classical information to a state. Therefore,  $\mathcal{C}$ can be specified by
$$\mathcal{C}(\cdot)=\sum_y\langle y|\cdot| y\rangle\rho_y.$$
Then the passing probability $\Tr[V\cdot (\mathcal{I}\otimes\mathcal{C})(\rho_{XY})]$ can be written more elegantly as $\mathbb{E}_{XY}[\Tr[\phi_X\rho_Y]]$.

In sharp contrast to the classical case, the Separation Lemma implies that there exists a $(n,m,\delta)$ quantum fingerprinting resilient against $k$ bits of classical information leakage, where $k$ is much larger than $m$. In fact we have the following theorem.

\begin{theorem}\label{thm:main_exist}
    For all $n$, $m=\omega(\log n)$ and $k=n-\omega(\log n)$, there exists a quantum $(n,m)$ cryptographic hash which is resilient against $k$ bits of classical information leakage.
\end{theorem}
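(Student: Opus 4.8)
The plan is to combine the construction of Buhrman et al.~\cite{buhrman2001quantum} with the Separation Lemma~\ref{lem:sepr} (in the form of the informal Separation Lemma~\ref{lem:sep}, $e_s \le p_g + \delta$). First I would instantiate the parameters: since $m = \omega(\log n)$, we may pick $\delta = \delta_n$ with $\log(1/\delta_n) = \Theta(m)$, so that $\delta_n$ is negligible in $n$ while still $m = \log n + O(\log 1/\delta_n)$; by the cited existence result (with explicit small-bias-space constructions from~\cite{naor1993small}), there is an efficiently computable $(n, m, \delta_n)$ quantum fingerprinting $\phi$, which is therefore a quantum cryptographic hash by Definition~\ref{def:qhash}. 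So the only thing left is to argue resilience against $k = n - \omega(\log n)$ bits of classical leakage.

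Next I would unpack what resilience means here: we must bound $\sup_{Y : |Y| \le k}\ \sup_{\{\rho_y\}}\ \mathbb{E}_{XY}\,\Tr[\phi_X \rho_Y]$, where $X$ is uniform on $\{0,1\}^n$. Fix any such $Y$ and any family of forged states $\{\rho_y\}$. The resulting forgery state is the cq-state $\rho_{XE} = \sum_{x,y} p_{xy}\, |x\rangle\langle x|\otimes \rho_y$, and its passing probability against $V = \sum_x |x\rangle\langle x|\otimes \phi_x$ is exactly $e_s = \Tr[V\rho_{XE}]$. Now I would apply the Separation Lemma with $\epsilon = 0$: the authentic encoding $\sum_x 2^{-n}|x\rangle\langle x|\otimes \phi_x$ is generated by the adversary's side information only if that side information has guessing probability at least $p_\downarrow(X|E)$, and more directly the lemma yields $e_s \le (1-\delta_n) p_g + \delta_n \le p_g(X|Y) + \delta_n$, where $p_g(X|Y)$ is the guessing probability of $X$ given the leakage. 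Since $|Y| \le k$ and $X$ is uniform on $n$ bits, $H_{\min}(X|Y) \ge n - k = \omega(\log n)$, hence $p_g(X|Y) = 2^{-H_{\min}(X|Y)} \le 2^{-(n-k)}$, which is negligible in $n$ because $n - k = \omega(\log n)$.

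Putting these together, $e_s \le 2^{-(n-k)} + \delta_n = \negl(n) + \negl(n) = \negl(n)$, uniformly over all choices of $Y$ with $|Y| \le k$ and all forgery maps; this is precisely the claimed $\sigma = \negl(n)$ resilience. The one technical point requiring care is the reduction that lets us invoke Lemma~\ref{lem:sepr}: the Separation Lemma is stated for $p_\downarrow^\epsilon(X|E)_\rho$ of a fixed target state with pairwise overlap $\delta$, so I would spell out that the adversary's strategy on classical side information $Y$ is, without loss of generality, deterministic (as argued in the text preceding Definition~\ref{def:qhash}'s leakage analog), that $\mathbb{E}_{XY}\Tr[\phi_X\rho_Y] = \Tr[V\rho_{XE}] = e_s$, and that $p_g$ in the lemma's conclusion is bounded by $p_g(X|Y) \le 2^{k-n}$ via the min-entropy bound $p_g(X|E) \le p_g(X)\cdot 2^{|E|}$ specialized to the classical case. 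I do not expect a genuine obstacle here — the real work is already done in Sections~\ref{sec:cgp} and~\ref{sec:prf} — so this theorem is essentially a matter of assembling the pieces and checking that the two negligible terms stay negligible under the stated parameter regime.
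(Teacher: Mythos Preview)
Your proposal is correct and follows essentially the same approach as the paper: instantiate an $(n,m,\delta)$ quantum fingerprinting from Buhrman et al.\ with $\delta=\negl(n)$ (possible since $m=\omega(\log n)$), then apply the Separation Lemma to get $e_s\le p_g+\delta$, and bound $p_g\le 2^{-(n-k)}=\negl(n)$ from $n-k=\omega(\log n)$. The paper's proof is just a terser version of exactly this argument.
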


Before proving this theorem, note that this theorem is tight on both sides. If $k=n-O(\log n)$, then an adversary knowing the side information can guess correctly the actual value of $X$ with probability inverse polynomial, thus the success probability would also be non-negligible; on the other hand, if $m=O(\log n)$, we claim that an adversary with zero side information can still pass the test with non-negligible probability. 

To see this, let's start from the completeness condition. This is saying that there exist $\rho_x$'s such that $\mathbb{E}_X[\Tr[M_X\rho_X]]=1$. This can only happen when each term is 1, which in turn implies that $\Tr[M_x]\geq 1$ for all $x$. Now suppose the adversary has no side information about the classical message, so the best he can do is to prepare a state $\rho$. The success probability will then be $$\mathbb{E}_X\Tr[M_X\rho]\leq \lambda_{\max}(\mathbb{E}_X[M_X]),$$
which can be approached when $\rho_0=|\psi_0\rangle\langle \psi_0|$, $|\psi_0\rangle$ being the eigenvector corresponding to the largest eigenvalue of $\mathbb{E}_X[M_X]$. The operator norm can be lower bounded from the trace by 
$$\lambda_{\max}(\mathbb{E}_X[M_X])\geq \frac{\Tr[\mathbb{E}_X[M_X]}{\dim \mathcal{M}}\geq\frac{1}{poly(n)},$$
resulting in a non-negligible passing probability without any side information.

Now let us proceed to the proof of Theorem~\ref{thm:main_exist}.
\begin{proof}[Proof of Theorem~\ref{thm:main_exist}]
    Take an $(n,m,\delta)$ quantum fingerprinting $\phi$, where $\delta$ will be specified later. We know that such a finderprinting exists for $m=O(\log n+2\log\frac1\delta)$, therefore there exists $\delta=\negl(n)$ such that $\phi$ is a quantum cryptographic hash. The verification scheme associated to this quantum fingerprinting is then
    $$V=\sum_{x}|x\rangle\langle x|\otimes \phi_x.$$
    Upon getting $k = n-l$ bits of classical information, the guessing probability of the adversary to the classical message is upper bounded by $2^{-l}$. By Separation Lemma, the probability that the adversary pass the test $e_s$ is upper bounded by $2^{-l}+\delta$, which is still a negligible function of $n$ given $l=\omega(\log n)$.
\end{proof}
\subsection{Generalized Verification Scheme Resilient to Classical Leakage}
Our verification scheme is maximally resilient to classical leakage of information. However, it is still not good enough because the verifier may need to get full access to the whole message. One may ask if it is possible that the verifier only use a small amount of information from the message to perform the verification scheme, yet the verification is still resilient to classical leakage.

To formulate this idea, we generalize the definition of a verification scheme $V=(\mathcal{C},M)$ with
three parameters $(n,k,m)$ played by two players $A$ and
$B$ as follows:
\begin{enumerate}
    \item A joint distribution $XY$, where the marginal distribution of $X\in\{0,1\}^n$ is uniformly random, is generated by nature.
    \item $A$ gets the $k$-qubit quantum state $\rho_X=\mathcal{C}(|X\rangle\langle X|)$ and $B$ gets the side-information string $Y$.
    \item B generates an $m$-qubit quantum state $\mu_Y$ and sends it to $A$.
    \item A perform a joint measurement $M$ on the state $\rho_X\otimes \mu_Y$. The game is successful if and only if the measurement accepts. The overall success probability is thus
        $$e_s^V=\mathbb{E}_{XY}[M(\rho_X\otimes\mu_Y)].$$
\end{enumerate}

For a given $\ell$, define the optimal success probability over all classical leakage $Y$ where $H(X|Y)\geq \ell$ to be $e_s^{V*}(\ell)$.
We can then define the resilience formally:

\begin{definition}
    \label{defn:vsec}
    A $(n,k,m)$-verification scheme $V=(\mathcal{C},M)$ is called $\sigma$-resilient to $\ell$ bits of classical leakage if we have
    $$e_s^{V*}(\ell)\leq \sigma.$$
    In the case where $\sigma$ is not specified, it is assumed that $\sigma$ is negligible.
\end{definition}

Interestingly, one can use the power of quantum side information to reduce the size of the advice state.

\begin{theorem}
    For all $n$, there exists a $(n,k,m)$-verification scheme $V$ which is resilient to $\ell$ bits of classical leakage whenever $$k=m=\omega(\log^2n), n-\ell=\omega(\log n).$$
\end{theorem}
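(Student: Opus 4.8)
The plan is to build the verification scheme by \emph{teleporting} the leakage problem into the setting already handled by the Separation Lemma. The obstacle we must overcome is that the verifier no longer holds the full message $X$; he holds only a $k$-qubit advice state $\rho_X$. The natural idea is to let $\rho_X$ be a short \emph{quantum fingerprint} of $X$ as well, and to let the prover's $m$-qubit message $\mu_Y$ be an (alleged) copy of a fingerprint of $X$; the measurement $M$ then performs a SWAP-test-style comparison between $\rho_X$ and $\mu_Y$. Concretely, fix an $(n, k, \delta_1)$ quantum fingerprinting $\psi$ with $k=\omega(\log^2 n)$ and $\delta_1$ negligible (possible since $k=\Omega(\log n)$), and set $\C(|x\rangle\langle x|)=\psi_x$. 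The measurement $M$ is the projector onto the symmetric subspace applied to the two $k$-qubit (so $m=k$) registers, i.e.\ $M=\tfrac12(I+\mathrm{SWAP})$. An honest prover holding $X$ sends $\mu=\psi_X$ and passes with probability $1$, giving completeness.

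For soundness, I would first apply Lemma~\ref{lem:cgp} / Lemma~\ref{lem:ccsimp} exactly as in the proof of the Separation Lemma: the adversary's side information can be taken to be a subset $S\subseteq\{0,1\}^n$ on which $X$ is uniform, with $\sum_S p_S = p_g \le 2^{\ell-n}$ and $\sum_S p_S|S|=1$. Conditioned on $S$, the adversary prepares some $k$-qubit state $\mu_S$, and the success probability is
$$
e_s^V \;=\; \sum_S p_S \,\mathbb{E}_{x\in S}\,\Tr\!\big[M(\psi_x\otimes\mu_S)\big]
\;=\;\sum_S p_S\,\Tr\!\Big[\mu_S\cdot \mathbb{E}_{x\in S}\,\tfrac12(\psi_x+\psi_x\mu_S\psi_x\text{-ish})\Big].
$$
The clean way is to note $\Tr[M(\psi_x\otimes\mu_S)]=\tfrac12\big(1+\langle\psi_x|\mu_S|\psi_x\rangle\big)\le \tfrac12\big(1+\langle\psi_x|\mu_S|\psi_x\rangle\big)$, so averaging over $x\in S$ and bounding $\mathbb{E}_{x\in S}\langle\psi_x|\mu_S|\psi_x\rangle = \Tr[\mu_S \cdot \tfrac1{|S|}\sum_{x\in S}\psi_x] \le \lambda_{\max}\big(\tfrac1{|S|}\sum_{x\in S}\psi_x\big)$. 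By the Cotlar--Stein Lemma (Lemma~\ref{lem:csl}), $\lambda_{\max}\big(\sum_{x\in S}\psi_x\big)\le 1+(|S|-1)\delta_1$, hence $\mathbb{E}_{x\in S}\langle\psi_x|\mu_S|\psi_x\rangle \le \tfrac{1}{|S|}+\delta_1$. Plugging back,
$$
e_s^V \;\le\; \sum_S p_S\cdot\tfrac12\Big(1+\tfrac1{|S|}+\delta_1\Big)
\;\le\; \tfrac12\Big(p_g + \sum_S p_S\tfrac1{|S|} + \delta_1\Big).
$$
The term $\sum_S p_S/|S|$ is the only genuinely new quantity, and here is where the second fingerprint enters: since $\tfrac12(1+\langle\psi_x|\mu_S|\psi_x\rangle)$ is always at most $\tfrac12(1+\sqrt{\langle\psi_x|\mu_S|\psi_x\rangle})$ is not quite what we need — instead I would argue $\sum_S p_S/|S|$ is \emph{not} automatically small (take $S$ a singleton, $p_g$ itself), so the SWAP test alone against a single $k$-qubit advice register is insufficient and this is the main obstacle.

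The resolution — and the reason the theorem asks for $k=m=\omega(\log^2 n)$ rather than $\omega(\log n)$ — is to \emph{amplify}: let the advice $\rho_X$ consist of $t=\Theta(\log n)$ copies of an $\omega(\log n)$-qubit fingerprint $\psi_X$ (so $k=t\cdot\omega(\log n)=\omega(\log^2 n)$), require the prover to send $t$ copies $\mu_S^{\otimes}$ as well ($m=k$), and let $M$ run the SWAP test on all $t$ paired registers, accepting iff \emph{all} $t$ tests accept. Then for each fixed $S$ the per-copy acceptance probability is $\le \tfrac12(1+\tfrac1{|S|}+\delta_1)$, and independence across copies (the copies of $\psi_x$ are in tensor, and the adversary's $t$ registers, whatever they are, are measured by a product of SWAP tests — here one uses that $\mathbb{E}_{x\in S}$ of a product is dominated by the product of the worst single-copy bound via $\lambda_{\max}$ of the $t$-fold tensor $\tfrac1{|S|}\sum_{x\in S}\psi_x^{\otimes t}$, again bounded by Cotlar--Stein applied to the near-orthogonal unit vectors $|\psi_x\rangle^{\otimes t}$ which have pairwise overlap $\le \delta_1^t$). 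So the per-$S$ bound becomes at most $\big(\tfrac12(1+\tfrac1{|S|}+\delta_1)\big)^{t}\le \big(\tfrac34+\tfrac12\delta_1\big)^{t}$ whenever $|S|\ge 2$, which is negligible for $t=\omega(\log n)$, while for $|S|=1$ the contribution is at most $p_g\cdot 1$. Summing: $e_s^V\le p_g + (\text{negl})\le 2^{\ell-n}+\negl(n)$, which is negligible precisely when $n-\ell=\omega(\log n)$. I expect the delicate point to be rigorously justifying the independence/amplification step — i.e.\ that $\mathbb{E}_{x\in S}\Tr[M_t(\psi_x^{\otimes t}\otimes \mu_S)]\le (\tfrac1{|S|}+\delta_1+1)^t/2^t$ — which follows from writing $M_t$ as a product of commuting SWAP-test projectors and bounding $\lambda_{\max}$ of $\tfrac1{|S|}\sum_{x\in S}\psi_x^{\otimes t}$ by Cotlar--Stein on the vectors $|\psi_x\rangle^{\otimes t}$; everything else is bookkeeping with the parameters $\delta_1=\negl(n)$, $t=\Theta(\log n)$, and the reduction to uniform-subset side information from Lemma~\ref{lem:ccsimp}.
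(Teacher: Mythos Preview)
Your construction --- $t$ copies of an $\omega(\log n)$-qubit fingerprint as advice, $t$ parallel SWAP tests as the measurement, accept iff all pass --- is exactly the paper's. The analysis, however, diverges, and your amplification step contains a genuine gap.

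The claimed bound $\mathbb{E}_{x\in S}\Tr\big[M_t(\psi_x^{\otimes t}\otimes\mu_S)\big]\le\big(\tfrac12(1+\tfrac1{|S|}+\delta_1)\big)^t$ is false. Take $\delta_1=0$, $|S|=2$ with orthogonal $|\psi_1\rangle,|\psi_2\rangle$, $t=2$, and $\mu_S=\psi_1\otimes\psi_1$. Since $\langle\psi_x|\tfrac{I+\mathrm{SWAP}}{2}|\psi_x\rangle=\tfrac{I+\psi_x}{2}$ as an operator on the other register, one computes $\Tr[M_2(\psi_1^{\otimes2}\otimes\mu_S)]=1$ and $\Tr[M_2(\psi_2^{\otimes2}\otimes\mu_S)]=\tfrac14$, so the average is $\tfrac58>\tfrac9{16}$. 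The intuition that ``same $x$ across copies still amplifies like independent tests'' fails precisely because the adversary can align all $t$ registers with a single guess. Cotlar--Stein on $|\psi_x\rangle^{\otimes t}$ controls $\lambda_{\max}\big(\tfrac1{|S|}\sum_x\psi_x^{\otimes t}\big)$, but the relevant operator after partially contracting the SWAP tests is $\big(\tfrac{I+\psi_x}{2}\big)^{\otimes t}$, not $\psi_x^{\otimes t}$. (There is also a missing factor of $|S|$ in your expression for $e_s^V$: with the normalization of Lemma~\ref{lem:ccsimp} the joint weight of $(x,S)$ is $p_S$, so $e_s^V=\sum_S p_S\sum_{x\in S}\Tr[\dots]=\sum_S p_S\,|S|\,\mathbb{E}_{x\in S}[\dots]$.)

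The paper sidesteps all of this by not attempting per-$S$ amplification at all. It expands $M_t=2^{-t}\sum_{T\subseteq[t]}\mathrm{SWAP}^T$, uses $\Tr\big[\mathrm{SWAP}^T(\psi_x^{\otimes T}\otimes\mu_Y^T)\big]=\Tr\big[\psi_x^{\otimes|T|}\,\mu_Y^T\big]\le\Tr\big[\psi_x\,\mu_Y^{\{i\}}\big]$ for any single $i\in T$, and obtains
\[
e_s^V\;\le\;2^{-t}\;+\;\max_i\,\mathbb{E}_{XY}\Tr\big[\phi_X\,\mu_Y^{\{i\}}\big].
\]
The second term is \emph{exactly} the single-fingerprint forgery probability, so Theorem~\ref{thm:main_exist} (equivalently the Separation Lemma) applies as a black box and bounds it by $p_g+\delta_1=\negl(n)$. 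No subset decomposition is redone, no tensor-power Cotlar--Stein, no amplification argument. Your route is salvageable --- expanding $\big(\tfrac{I+\psi_x}{2}\big)^{\otimes t}$ binomially and applying Cotlar--Stein to each $\psi_x^{\otimes j}$ term yields the correct $\mathbb{E}_{x\in S}[\dots]\le\tfrac1{|S|}+\big(\tfrac{1+\delta_1}{2}\big)^t$, which after restoring the $|S|$ factor gives $e_s^V\le p_g+\big(\tfrac{1+\delta_1}{2}\big)^t$ --- but the paper's reduction to one coordinate is both shorter and reuses the Separation Lemma wholesale rather than reproving it inside the SWAP-test analysis.
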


\begin{proof}
    Proof by construction. Take $t,m'=\sqrt{k}=\omega(n)$. We first fix a $(n,m')$-quantum cryptographic hash function $\phi$. for a given message $x$, the advice state would just be $t$-fold tensor product of $\phi_x$'s, namely
    $$\rho_x:=\phi_x^{\otimes t}.$$

    Upon receiving the hash $\mu_Y$ consisting of $t$ parts of $m'$-qubit states, the verifier performs SWAP test between all $t$ pairs of $\phi_X$ and each of the qubit states, accepts if all SWAP tests pass and rejects otherwise. Note that an honest party having full access to $X$ would be able to produce $\phi_X^{\otimes t}$ perfectly, thus successes with probability 1.

    To see that this scheme is resilient against classical leakage, assume now that the state received by the verifier is a forgery state $\mu_Y$. Regarding both $\rho_X$ and $\mu_Y$ as $t$-partite states, we use $\rho_X^T, \mu_Y^T, SWAP^T$ to denote the marginal state on subsystems $T\subseteq[t]$ and the swap between the two subsystems respectively. The measurement corresponding to one copy of SWAP test is $\frac{I+SWAP}{2}$, thus the success probability will be
    \begin{align*}
    e_s^V=&\mathbb{E}_{XY}\left[\frac{\Tr[(I+SWAP)^{\otimes t}(\rho_X\otimes \mu_Y)]}{2^t}\right]\\
    =&\frac{1}{2^t}\sum_{T\in[t]}\mathbb{E}_{XY}\left[\Tr[\bigotimes_{i\in T}SWAP^i(\rho_X\otimes \mu_Y)]\right]\\
    =&\frac{1}{2^t}\sum_{T\in[t]}\mathbb{E}_{XY}\left[\Tr[SWAP^T(\rho_X^T\otimes \mu_Y^T)]\right]\\
    =&\frac{1}{2^t}\sum_{T\in[t]}\mathbb{E}_{XY}[\Tr[\rho_X^T \cdot \mu_Y^T]].\\
\end{align*}
Here the third line comes by tracing out irrelevant states, and the fourth line comes from the identity $\Tr[SWAP(\rho\otimes \sigma)]=\Tr[\rho\sigma]$. For each term $\Tr[\rho_X^T\cdot \mu_Y^T]$ with nonempty $T$ and any $i\in T$, we have
$$\Tr[(\rho_X^T\cdot \mu_Y^T)]\leq \Tr[(\phi^i\otimes I^{T\setminus\{i\}})\mu_Y^T]=\Tr[\phi_x\rho_Y^i].$$
This gives us
$$e_s^V\leq \frac{1}{2^t}+\max_i\mathbb{E}_{XY}\Tr[\rho_Y^i\phi_X].$$
By Theorem~\ref{thm:main_exist}, forall $Y$ such that $H_{\min}(X|Y)\geq \ell$, $i\in[t]$, $\mathbb{E}_{XY}\Tr[\rho_Y^i\phi_X]=\negl(n)$ given that $\phi$ is a quantum cryptographic hash function. $e_S^{V*}(\ell)=\negl(n)$ then comes from that $t=\omega(\log n)$.
\end{proof}

\section{Applications of Separation Lemma in Quantum-proof Extractors}
\label{sec:re}
Let us now recall the setting of seeded extractor. A seeded extractor $Ext:\{0,1\}^n\times \{0,1\}^d\rightarrow \{0,1\}^m$ takes a weak random source $X$ as well as a much shorter, uniform and independent seed $Y$ and outputs a nearly uniform distribution. Rigorously we have the following definition.
\begin{definition}[Extractor]
    A function $Ext:\{0,1\}^n\times\{0,1\}^d\rightarrow \{0,1\}^m$ is called an \emph{$(k,\epsilon)$-extractor} if for all random source $X$ such that $-\log p_g(X)\geq k$, we have
    $$\Vert U_m-Ext(X\otimes U_d)\Vert\leq \epsilon.$$
\end{definition}

When used in the setting of privacy amplification, one would consider the case where there is a leakage of the random source. The output of the extractor then need to not only be close to uniform, but also be almost independent of the side information. Depending on whether the leakage is classical or quantum, we have the following definitions for \emph{classical-proof} and \emph{quanutm-proof} extractors respectively.

\begin{definition}[Classical-proof Extractor]
    A function $Ext:\{0,1\}^n\times\{0,1\}^d\rightarrow \{0,1\}^m$ is called an \emph{$(k,\epsilon)$-classical-proof extractor} if for all random source $X$ with classical side information $Y$ such that $-\log p_g(X|Y)\geq k$, we have
    $$\Vert U_m\otimes Y-Ext(X\otimes U_d)Y\Vert\leq \epsilon.$$
\end{definition}

\begin{definition}[Quantum-proof Extractor]
    A function $Ext:\{0,1\}^n\times\{0,1\}^d\rightarrow \{0,1\}^m$ is called an \emph{$(k,\epsilon)$-quantum-proof extractor} if for every cq state $\rho_{XE}$  such that $-\log p_g(X|E)\geq k$, we have
    $$\Vert U_m\otimes \rho_E-(Ext\otimes \mathcal{I})(\rho_{XE})\Vert_{tr}\leq \epsilon.$$
\end{definition}

Intensive research on all three kinds of extractors has been done over years. Fortunately, the following theorem says that a good extractor is automatically a good classical-proof extractor, up to very little parameter loss:

\begin{theorem}[~\cite{konig2008bounded}]
    \label{thm:cpext}
    Any $(k,\epsilon)$-extractor is a $(k+\log 1/\epsilon, 2\epsilon)$-classical-proof extractor.
\end{theorem}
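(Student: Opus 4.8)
The plan is to derive the classical-proof guarantee from the plain extractor guarantee by conditioning on each value of the side information and treating the ``bad'' conditionings separately. Fix a source $X$ with classical side information $Y$ satisfying $-\log p_g(X|Y)\geq k+\log(1/\epsilon)$, i.e.\ $p_g(X|Y)=\mathbb{E}_{y\sim Y}\,p_g(X|Y{=}y)\leq \epsilon\,2^{-k}$. Call a value $y$ \emph{good} if $p_g(X|Y{=}y)\leq 2^{-k}$ (equivalently $-\log p_g(X|Y{=}y)\geq k$), and \emph{bad} otherwise.

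First I would bound the total probability weight of the bad values by Markov's inequality: since the nonnegative quantity $p_g(X|Y{=}y)$ has expectation at most $\epsilon\,2^{-k}$ over $y\sim Y$, we get $\Pr_Y[\,y\text{ bad}\,]=\Pr_Y[\,p_g(X|Y{=}y)>2^{-k}\,]\leq\epsilon$. Next, for every good $y$ the conditional distribution $X|_{Y{=}y}$ is a source of guessing probability at most $2^{-k}$, so the $(k,\epsilon)$-extractor hypothesis applies to it directly and yields $\bigl\|\,U_m-Ext(X|_{Y{=}y}\otimes U_d)\,\bigr\|\leq\epsilon$.

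Finally I would combine the estimates through the identity for statistical distance between two joint distributions sharing a common marginal on $Y$, namely $\bigl\|\,U_m\otimes Y-Ext(X\otimes U_d)\,Y\,\bigr\|=\sum_y\Pr_Y[y]\,\bigl\|\,U_m-Ext(X|_{Y{=}y}\otimes U_d)\,\bigr\|$; this holds because $U_m$ is independent of $Y$ and the seed $U_d$ is independent of $X$, so both $U_m\otimes Y$ and $Ext(X\otimes U_d)\,Y$ have the same marginal $Y$ and the distance between them is the $Y$-average of the conditional distances. Splitting the sum into good and bad $y$, bounding every good term by $\epsilon$ and every bad term trivially by $1$, and using the weight bound $\Pr_Y[\,y\text{ bad}\,]\leq\epsilon$, we obtain a total of at most $\epsilon+\epsilon=2\epsilon$, as claimed.

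There is no serious obstacle; the one step deserving care is the averaging/Markov passage, which is exactly where the parameters degrade. Requiring $\log(1/\epsilon)$ extra bits of conditional min-entropy is precisely what forces all but an $\epsilon$-fraction of the side information to leave behind a source of min-entropy at least $k$, and the factor $2$ in the error arises from charging that leftover $\epsilon$-fraction its worst-case distance of $1$. A secondary routine point is to verify the statistical-distance decomposition used in the last step, which is standard for joint distributions with a shared marginal.
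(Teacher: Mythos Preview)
Your proof is correct and matches the paper's approach. The paper does not give a full proof of this cited result, but its one-sentence sketch---that every state in $CC(k+\log 1/\epsilon)$ is $\epsilon$-close to the set $C(k)$, after which monotonicity of trace distance yields the extra $\epsilon$---is exactly your Markov/good-bad decomposition viewed from a slightly different angle: replacing the conditional sources on the bad $y$'s produces a state in $C(k)$ at distance at most $\epsilon$, and applying the $(k,\epsilon)$-extractor then costs another $\epsilon$.
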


One long-standing open problem, then, is whether a classical-proof extractor is essentially a quantum-proof extractor with roughly the same parameter. Currently the best result is that a $(k,\epsilon)$-classical proof extractor is a $(k+\log 2/\epsilon, O(2^{m/2}\sqrt{\epsilon}))$-extractor. For nonexponential blow-up of parameters, Gavinsky et al.~\cite{gavinsky2007exponential} showed that there exists a
$(k-\Theta(n),\epsilon)$-classical extractor which is not secure against $O(\log n)$ qubits of quantum side information.  Proving a result without exponential blowup on the parameter in the practical range, however, is a very challenging problem. 

The Separation Lemma here suggests that such a result without a drastic blowing up on the error parameter may not exist. To see this, we need to define some sets first.

\begin{definition}
    For every $k$, define the following sets:
    \begin{itemize}
        \item $C(k)$ is the set of all cq states $\rho_{XE}$ that can be generated from a classical random source $XY$ with $$-\log \max_{y}p_g(X|Y=y)\geq k$$ via a channel acting only on the $Y$ part;
        \item $CC(k)$ is the set of all cq states $\rho_{XE}$ that can be generated from a classical random source $XY$ with $$-\log p_g(X|Y)\geq k$$ via a channel acting only on the $Y$ part;
        \item $CQ(k)$ is the set of all cq states $\rho_{XE}$ such that $-\log p_g(X|E)_\rho\geq k$.
        \end{itemize}
\end{definition}

Clearly we have $C(k)\subseteq CC(k)\subseteq CQ(k)$. To see the importance of these sets, we can use the alternative, though equivalent definitions of extractors, classical-proof and quantum-proof extractors:
\begin{definition}[Alternative definition for extractors]
    A function $Ext:\{0,1\}^n\times \{0,1\}^d\rightarrow\{0,1\}^m$ is called a $(k,\epsilon)$-extractor (classical-proof extractor, quantum extractor), if for all cq states $\rho_{XE}\in C(k) (CC(k), CQ(k))$ with $X\in\{0,1\}^n$ we have
    $$\Vert U_m\otimes\rho_E-(Ext\otimes \mathcal{I})(\rho_{XE})\Vert_{tr}\leq \epsilon.$$
\end{definition}

The proof of Theorem~\ref{thm:cpext} essentially makes use of the fact that every state $\rho_{XE}$ in $CC(k+\log 1/\epsilon)$ is $\epsilon$-close to the set $C(k)$, and thus a $(k,\epsilon)$-extractor, when applied to $\rho$, would introduce at most $2\epsilon$ error from the state $U_m\otimes\rho_E$ due to monoticity of trace distance. If one could obtain similar results between $CQ(k)$ and $CC(k)$, then we could easily prove that a classical-proof extractor is secret quantum-proof.

This turns out to not be the case according to Separation Lemma. Recall our definition of $\epsilon$-smooth conversion parameter $p_{\downarrow}^{\epsilon}(X|E)_\rho$, which measures the least guessing probability of a classical distribution we need in order to generate the desired state $\rho$, i.e.\
$$d(\rho, CC(k))\leq \epsilon \Leftrightarrow -\log p_{\downarrow}^\epsilon(X|E)_\rho\geq k.$$

\begin{theorem}
    \label{thm:topsep}
    For any $k$, there exists a cq state $\rho_{XE}$ where $X\in\{0,1\}^{k(1+o(1))}$ such that $\rho\in CQ(k)$ while $d(\rho, CC(2))\geq 0.98$.
\end{theorem}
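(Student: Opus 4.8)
The plan is to let $\rho_{XE}$ consist of a uniformly random $n$-bit message $X$ together with a single copy $|\phi_X\rangle$ of its quantum fingerprint, and then pit the two entropy conditions against each other. A fingerprint occupies only $m$ qubits, so it gives a quantum adversary almost no information about $X$ — this will put $\rho$ into $CQ(k)$ — whereas the Separation Lemma says that any \emph{classical} side information that even approximately reproduces the fingerprint must almost completely determine $X$, which will keep $\rho$ far from every state of $CC(2)$.

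First I would fix a small absolute constant $\delta_0$, say $\delta_0=1/10$ (anything below $1/3$ will do), and for each $k$ choose $m$ to be the least integer for which the Buhrman et al.\ construction yields an $(n,m,\delta_0)$ quantum fingerprinting $\phi$ on $n:=k+m$ letters. Since that construction needs only $m=\log n+O(\log 1/\delta_0)$, the relation $m=\log(k+m)+O(1)$ forces $m=O(\log k)=o(k)$, so $n=k+m=k(1+o(1))$, as required. I would then set $\rho_{XE}:=2^{-n}\sum_{x\in\{0,1\}^n}|x\rangle\langle x|\otimes|\phi_x\rangle\langle\phi_x|$, a cq state whose $E$-register is $m$ qubits and whose maximum overlap is $\delta_0$.

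Next I would verify the two assertions. That $\rho\in CQ(k)$: since $p_g(X)=2^{-n}$ and the $E$-register has only $m$ qubits, $p_g(X|E)_\rho\le p_g(X)\cdot 2^m=2^{m-n}=2^{-k}$ (equivalently, $\sigma=2^{-n}I$ on the $m$-qubit register dominates every $\rho_x=2^{-n}|\phi_x\rangle\langle\phi_x|$, so the SDP formula gives $p_g(X|E)_\rho\le\Tr[\sigma]=2^{-k}$); hence $-\log p_g(X|E)_\rho\ge k$. That $d(\rho,CC(2))\ge 0.98$: I would argue by contradiction. If some $\tau\in CC(2)$ satisfied $\epsilon:=\Vert\rho-\tau\Vert_{tr}<0.98$, then, writing $\tau=(\mathcal I\otimes\mathcal C)(\eta_{XY})$ for a classical source $\eta$ with $p_g(X|Y)_\eta\le 2^{-2}=1/4$, the definition of the smooth conversion parameter forces $p_\downarrow^{\epsilon}(X|E)_\rho\le 1/4$. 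Feeding this into the Separation Lemma, $1-\epsilon/2\le(1-\delta_0)\,p_\downarrow^{\epsilon}(X|E)_\rho+\delta_0\le 1/4+\tfrac34\delta_0$, hence $\epsilon\ge\tfrac32-\tfrac32\delta_0>0.98$, contradicting $\epsilon<0.98$; so no such $\tau$ exists.

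The quantitative core is finished the moment $\rho$ is in place, so I expect the only fiddly point to be the construction step — choosing $m$ so that a fingerprinting on $n=k+m$ symbols with a fixed small $\delta_0$ exists while keeping $m=o(k)$ — which is exactly the routine fixed-point estimate above. The one thing to be careful about is the normalization of the trace norm: since $\Vert\cdot\Vert_{tr}$ ranges over $[0,2]$ in this paper, the same calculation in fact yields $d(\rho,CC(2))\ge\tfrac32-o(1)$, and the stated $0.98$ is simply a clean round number below that bound. I do not anticipate any substantive obstacle beyond this bookkeeping.
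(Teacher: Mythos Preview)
Your proposal is correct and follows essentially the same route as the paper: build $\rho_{XE}$ from a uniform $X$ and a single low-overlap fingerprint, use the $m$-qubit bound $p_g(X|E)\le 2^{m-n}$ for the $CQ(k)$ inclusion, and invoke the Separation Lemma to force any classical approximant in $CC(2)$ to be at least $0.98$ away. The only cosmetic differences are that the paper fixes $\delta=0.02$ (versus your $\delta_0=1/10$) and cites Johnson--Lindenstrauss rather than the Buhrman et~al.\ fingerprinting, which for this purpose are interchangeable; your observation that the argument actually yields $d(\rho,CC(2))\ge 3/2-o(1)$ is a correct strengthening that the paper does not make explicit.
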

\begin{proof}

Fix $\delta = 0.02$ and $\epsilon = 0.98$. By the Johnson-Lindenstrauss Lemma, we have a cq state $$\rho_{XE}=2^{-n}\sum_{x\in\{0,1\}^n}|x\rangle\langle x|\otimes|\psi_x\rangle\langle \psi_x|\in\mathcal{S}_{=}(\X\otimes \E)$$
such that $\log \dim \E=O(\log n)$ and $\max_{x\neq x'}|\langle \psi_x|\psi_{x'}\rangle|$. For given $k$, there exists $n=k(1+o(1))$ such that $n-O(\log n)\geq k$. With that $n$, we have 
$$-\log p_g(X|E)_\rho\geq k\Rightarrow \rho\in CQ(k)$$ as well as $$-\log p^{\epsilon}_{\downarrow}(X|E)_{\rho}\leq 2.$$ By the Separation Lemma, we have 
$$p_{\downarrow}^{\epsilon}(X|E)_\rho \geq 1-\delta-\epsilon/2\Rightarrow d(\rho,CC(2))\geq 0.98.$$ 
\end{proof}

The Separation Lemma suggests that an arbitrary classical extractor may not be quantum-proof, i.e.\ the sets $CQ(k)$ and $CC(k)$ are spatially separated. Nevertheless, one may still prove that a classical extractor is quantum-proof, but in order to do that, one might need to use additional properties of the extractor, other than that it can extract randomness from all states in $CC(k)$.

\commentout{
\section{Conclusion}
In this work, we proved that there exists a quantum hash scheme which is resilient against leakage of classical side information to any information-theoretic adversarial party. Still, one may wonder what will happen in the context that the leakage can also be quantum. Although no quantum hash can be secure against an arbitrary quantum leakage, (the worst case being that the leakage is the quantum hash itself), one can consider more restricted situation, where the quantum leakage is local with respect to the classical information, or the adversarial
party has limited computational power or bounded storage, etc.

In the context of randomness extractor, the problem that whether an extractor is quantum-proof with comparable parameters is yet to be proved. Proving that a classical-proof extractor is in general quantum-proof seems now promising if some argument on spatial separation between $CC(k)$ and $CQ(k)$ can be proven. Nevertheless, there is still possibility that we can get much better parameter loss compared to existing result; and even if such a result does not exists, it is still possible that there exists a quantum-proof extractor with parameters comparable to known classical constructions.
}
\paragraph{Acknowledgements.} The authors would like to thank Kai-Min Chung, Ronald de Wolf, Carl Miller, Michael Newman and Fang Zhang for useful discussion on this topic. This research was supported in part by US NSF Award 1216729.
%%%%%%%%%%%%%%APPENDIX%%%%%%%%%%%%%%
\newpage
\appendix
    \section{Proof of the Cotlar-Stein Lemma}
    \label{sec:csl}

\begin{lemma}[Cotlar-Stein Lemma]
\label{lem:csl2}
For a set of unit vectors $\{|\psi_1\rangle,|\psi_2\rangle,\cdots,|\psi_n\rangle\}$ with maximum fidelity $\max_{i,j:i\neq j}|\langle \psi_i|\psi_j\rangle|\leq \delta$, we have
$$\lambda_{\max}\left(\sum_{i=1}^n|\psi_i\rangle\langle \psi_i|\right)\leq 1+(n-1)\delta.$$
\end{lemma}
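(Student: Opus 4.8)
The plan is to use positivity of $T := \sum_{i=1}^{n}|\psi_i\rangle\langle\psi_i|$ together with a power trick in the spirit of Cotlar--Stein. Since $T\succcurlyeq 0$, for every integer $m\ge 1$ the operator $T^m$ is again positive semidefinite with $\lambda_{\max}(T^m)=\lambda_{\max}(T)^m$, so $\Vert T^m\Vert_{op}=\lambda_{\max}(T)^m$. It therefore suffices to bound $\Vert T^m\Vert_{op}$ for all $m$ and then let $m\to\infty$.

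Expanding the product gives
\[
T^m \;=\; \sum_{i_1,\dots,i_m=1}^{n}\Bigl(\prod_{k=1}^{m-1}\langle\psi_{i_k}|\psi_{i_{k+1}}\rangle\Bigr)\,|\psi_{i_1}\rangle\langle\psi_{i_m}| .
\]
Each summand is a rank-one operator whose operator norm equals $\bigl|\prod_{k=1}^{m-1}\langle\psi_{i_k}|\psi_{i_{k+1}}\rangle\bigr|$, since $|\psi_{i_1}\rangle$ and $|\psi_{i_m}\rangle$ are unit vectors. By the triangle inequality, $\Vert T^m\Vert_{op}\le\sum_{i_1,\dots,i_m=1}^{n}\prod_{k=1}^{m-1}|\langle\psi_{i_k}|\psi_{i_{k+1}}\rangle|$. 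I would then sum the indices out one at a time, starting from $i_m$: for any fixed $i$, $\sum_{j=1}^{n}|\langle\psi_i|\psi_j\rangle|\le 1+(n-1)\delta$, because the $j=i$ term contributes $1$ and each of the remaining $n-1$ terms is at most $\delta$. Peeling off the $m-1$ factors this way and finally summing the free index $i_1$ over its $n$ values yields $\Vert T^m\Vert_{op}\le n\,(1+(n-1)\delta)^{m-1}$. Hence $\lambda_{\max}(T)\le n^{1/m}(1+(n-1)\delta)^{1-1/m}$, and letting $m\to\infty$ gives $\lambda_{\max}(T)\le 1+(n-1)\delta$.

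There is no deep obstacle here; the only point needing care is the termwise estimate. A black-box appeal to the classical Cotlar--Stein lemma — which bounds a sum of generic operators using only the norms $\Vert T_iT_j^{\dagger}\Vert$, $\Vert T_i^{\dagger}T_j\Vert$ — would lose a square root and give only $1+(n-1)\sqrt{\delta}$; one obtains the sharp constant precisely by exploiting the rank-one structure and summing the overlaps directly along the chain $i_1\to i_2\to\cdots\to i_m$. As an alternative to the power trick one can argue spectrally: writing $T=\Psi\Psi^{\dagger}$ with $\Psi$ the matrix whose columns are $|\psi_1\rangle,\dots,|\psi_n\rangle$, $T$ has the same nonzero eigenvalues as the Gram matrix $G=\Psi^{\dagger}\Psi$, $G_{ij}=\langle\psi_i|\psi_j\rangle$; since $G$ is Hermitian with unit diagonal and off-diagonal entries of modulus at most $\delta$, Gershgorin's disc theorem immediately gives $\lambda_{\max}(G)\le 1+(n-1)\delta$ (equivalently, estimate $\langle c|G|c\rangle$ for a unit $c\in\mathbb{C}^{n}$ by splitting off the diagonal and using $\sum_{i\ne j}|c_i||c_j|\le(n-1)\sum_i|c_i|^{2}$). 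I would present the power-trick proof, as it is the genuinely Cotlar--Stein-flavoured argument and keeps the appendix self-contained.
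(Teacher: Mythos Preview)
Your proof is correct and follows essentially the same Cotlar--Stein power trick as the paper: expand the $m$-th power, bound each chain factor by $|\langle\psi_{i_k}|\psi_{i_{k+1}}\rangle|$, peel off indices using $\sum_j|\langle\psi_i|\psi_j\rangle|\le 1+(n-1)\delta$, obtain $n(1+(n-1)\delta)^{m-1}$, and let $m\to\infty$. The only cosmetic difference is that the paper bounds $\Tr[T^m]$ (the Schatten $m$-norm) whereas you bound $\Vert T^m\Vert_{op}$ directly via the triangle inequality on the rank-one summands; both yield the identical estimate, and your Gram-matrix/Gershgorin alternative is a valid shortcut the paper does not mention.
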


\begin{proof}
We use the fact that the operator norm is upper bounded by all Schatten $p$ norms, i.e.

$$\lambda_{\max}(\rho)=\Vert \rho\Vert_\infty=\lim_{p\rightarrow \infty}\left(\Tr[\rho^p]\right)^{1/p}.$$

For an arbitrary positive integer $m$, let's now bound $$\Vert \sum_{i=1}^n|\psi_i\rangle\langle \psi_i|\Vert_m^m=\Tr\left[(\sum_{i=1}^n|\psi_i\rangle\langle\psi_i|)^m\right].$$ We have
\begin{align*}
\Tr\left[(\sum_{i=1}^n|\psi_i\rangle\langle\psi_i|)^m\right]&=\sum_{i_1,\cdots, i_m\in[n]}\Tr\left[\prod_{j=1}^m |\psi_{i_j}\rangle\langle \psi_{i_j}|\right]\\
&=\sum_{i_1,\cdots, i_m\in[n]}\prod_{j=1}^{m-1}\langle \psi_{i_j}|\psi_{i_{j+1}}\rangle\cdot \langle \psi_{i_m}|\psi_{i_1}\rangle\\
&\leq \sum_{i_1,\cdots, i_m\in[n]}\prod_{j=1}^{m-1}|\langle \psi_{i_j}|\psi_{i_{j+1}}\rangle|\cdot |\langle\psi_{i_m}|\psi_{i_1}\rangle|.\\
\end{align*}
Using the fact that both $|\psi_{i_m}\rangle$ and $|\psi_{i_1}\rangle$ are unit vectors, we have $|\langle \psi_{i_m}|\psi_{1}\rangle|\leq 1$. Then
\begin{align*}
 \Tr\left[(\sum_{i=1}^n|\psi_i\rangle\langle\psi_i|)^m\right]&\leq \sum_{i_1,\cdots, i_m\in[n]}\prod_{j=1}^{m-1}|\langle \psi_{i_j}|\psi_{i_{j+1}}\rangle|\\
&\leq \sum_{i_1,\cdots, i_{m-1}\in[n]}\prod_{j=1}^{m-2}|\langle \psi_{i_j}|\psi_{i_{j+1}}\rangle|\cdot\sum_{i_m}|\langle\psi_{i_{m-1}}|\psi_{i_{m}}\rangle|\\
\end{align*}
Note that for every $i_{m-1}$, the term $\sum_{i_{m}}|\langle \psi_{i_{m-1}}|\psi_{i_{m}}\rangle|$ can be upper bounded by $1+(n-1)\delta$. Repeatedly applying this argument, we have
\begin{align*}
\Tr[(\sum_{i=1}^n|\psi_i\rangle\langle\psi_i|)^m]&\leq \sum_{i_1,\cdots, i_{m-1}\in[n]}\prod_{j=1}^{m-2}|\langle \psi_{i_j}|\psi_{i_{j+1}}\rangle|\cdot (1+(n-1)\delta)\\
&\leq \sum_{i_1,\cdots, i_{m-2}\in[n]}\prod_{j=1}^{m-3}|\langle \psi_{i_j}|\psi_{i_{j+1}}\rangle|\cdot (1+(n-1)\delta)^2\\
&\leq \cdots\\
&\leq\sum_{i_1}(1+(n-1)\delta)^{m-1}\\
&= n \cdot (1+(n-1)\delta)^{m-1}.
\end{align*}
Therefore, for every $m$ we have $$\lambda_{\max}(\sum_{i=1}^n|\psi_i\rangle\langle\psi_i|)\leq (1+(n-1)\delta)^{1-\frac{1}{m}}\cdot n^{1/m}.$$
The result follows by letting $m\rightarrow \infty$.
\end{proof}

\newpage
\bibliographystyle{abbrv}
\bibliography{qhash}
\end{document}